\newtheorem{theorem}{Theorem}
\newtheorem{definition}{Definition}
\newtheorem{lemma}{Lemma}
\newcommand{\class}{$C^{(h)}$}
\def\ps@pprintTitle{%
 \let\@oddhead\@empty
 \let\@evenhead\@empty
 \def\@oddfoot{\centerline{\thepage}}%
 \let\@evenfoot\@oddfoot}
\begin{document}

\begin{frontmatter}

%% Title, authors and addresses

%% use the tnoteref command within \title for footnotes;
%% use the tnotetext command for theassociated footnote;
%% use the fnref command within \author or \address for footnotes;
%% use the fntext command for theassociated footnote;
%% use the corref command within \author for corresponding author footnotes;
%% use the cortext command for theassociated footnote;
%% use the ead command for the email address,
%% and the form \ead[url] for the home page:
%% \title{Title\tnoteref{label1}}
%% \tnotetext[label1]{}
%% \author{Name\corref{cor1}\fnref{label2}}
%% \ead{email address}
%% \ead[url]{home page}
%% \fntext[label2]{}
%% \cortext[cor1]{}
%% \address{Address\fnref{label3}}
%% \fntext[label3]{}

\title{Improved Lower Bounds for Online Hypercube and Rectangle Packing}

%% use optional labels to link authors explicitly to addresses:
%% \author[label1,label2]{}
%% \address[label1]{}
%% \address[label2]{}

\author[db]{David Blitz} 
\ead{d.c.blitz@robeco.nl}
\author[sh]{Sandy Heydrich\fnref{fn}} 
\ead{heydrich@mpi-inf.mpg.de}
\author[rvs]{Rob van Stee}
\ead{rob.vanstee@uni-siegen.de}
\author[avv]{Andr\'e van Vliet}
\ead{andrevanvliet@xs4all.nl}
\author[gw]{Gerhard J. Woeginger}
\ead{woeginger@algo.rwth-aachen.de}

\fntext[fn]{Sandy Heydrich is supported by a Google Europe PhD Fellowship.}

\address[db]{Robeco, Weena 850, 3014 DA Rotterdam, The Netherlands}
\address[sh]{Max Planck Institute for Informatics, Building E1.4, and Saarbr\"{u}cken Graduate School of Computer Science, Building E1.3, Saarland Informatics Campus, D-66123 Saarbr\"ucken, Germany}
\address[rvs]{Department of Mathematics, Universit\"{a}t Siegen, Walter-Flex-Strasse 3, D-57068 Siegen, Germany}
\address[avv]{Zwaluwlaan 18, 2261 BR Leidschendam, The Netherlands}
\address[gw]{Department of Computer Science, RWTH Aachen,
D-52056 Aachen, Germany}

\begin{abstract}
Packing a given sequence of items into as few bins as possible in an online fashion is a widely studied problem. 
We improve lower bounds for packing boxes into bins in two or more dimensions, both for general algorithms for squares and rectangles (in two dimensions) and for an important subclass, so-called Harmonic-type algorithms for hypercubes (in two or more dimensions). Lastly, we show that two adaptions of ideas from a one-dimensional packing algorithm \cite{HS15} to square packing do not help to break the barrier of 2.
\end{abstract}
\begin{keyword}
Bin packing \sep Hypercube packing\sep Online algorithm
\end{keyword}

\end{frontmatter}

\section{Introduction}

In this paper, we consider the problem of online bin packing in two or more dimensions. 
This problem is defined as follows: We receive a sequence of boxes $h_1, \ldots , h_n$ (called \textit{items}) in $d$-dimensional space, and each item $h_i$ has size $s_i^{(j)}$ in dimension $j$. We also consider the special case where all items are hypercubes with the same edge length $s_i$ in all dimensions. 
We furthermore have an infinite number of bins, which are hypercubes of edge length one. 
We have to assign each item $h_i$ to a bin and a position $(x_1, \ldots, x_d)$ inside this bin, such that $0 \le x_j \le 1-s_i^{(j)}$ for all $1 \le j \le d$ and no two items in the same bin are overlapping. Items must be placed parallel to the axes of the bins and in the orientation given in the input (i.e., rotations are not allowed).
We call a bin \textit{used} if at least one item is assigned to it, and our goal is to minimize the number of used bins. 
The online setting requires us to assign an item to a bin immediately when it arrives, without knowledge of future items. 
We consider this problem in two or more dimensions.

For measuring the quality of a solution of the algorithm, we use the standard notion of \textit{asymptotic performance ratio}. 
For an input sequence $\sigma$, let $\mathcal{A}(\sigma)$ be the number of bins algorithm $\mathcal{A}$ uses to pack the items in $\sigma$ and let $OPT(\sigma)$ be the minimum number of bins in which these items can be packed. 
The asymptotic performance ratio for $A$ is defined as
\begin{align*}
R_\mathcal{A}^\infty = \limsup_{n \rightarrow \infty}\sup_{\sigma}\left\{ \left.\frac{\mathcal{A}(\sigma)}{OPT(\sigma)} \right\vert OPT(\sigma)=n \right\}
\end{align*}
If $\mathcal{O}$ denotes a class of packing algorithms, then the optimal asymptotic performance ratio for class $\mathcal{O}$ is defined as $R_\mathcal{O}^\infty = \inf_{\mathcal{A} \in \mathcal{O}}R_\mathcal{A}^\infty$. From now on, we will only talk about asymptotic performance ratios, although we omit the word asymptotic.

\subsection{Previous Results}
The classic online bin packing problem in one dimension was first considered by Ullman \cite{Ullman71}, and he also gave the \textsc{FirstFit} algorithm with performance ratio $\frac{17}{10}$ \cite{JDUGG74}. 
The \textsc{NextFit} algorithm was introduced by Johnson \cite{Johnson74}, who showed that this algorithm has a performance ratio of 2.

The \textsc{Harmonic} algorithm was introduced by Lee and Lee \cite{LL85}. 
If we define $u_1=2, u_{i+1}=u_i(u_i-1)+1$, then this algorithm has performance ratio $h_\infty = \sum_{i=1}^{\infty}\frac{1}{u_i-1} < 1.69104$. 
It uses bounded space (i.e. only a constant number of bins are open at a time, meaning that items can be added to them) and they showed that no algorithm with this property can perform better. 
Later, various improvements of this approach were given (using unbounded space), including \textsc{RefinedHarmonic} (performance ratio $\frac{373}{228} < 1.63597$) \cite{LL85}, 
\textsc{ModifiedHarmonic} (performance ratio $< 1.61562$) and \textsc{ModifiedHarmonic2} (performance ratio $<1.61217$) by Ramanan et al. \cite{RBLL89}, \textsc{Harmonic++} (performance ratio $<1.58889$) by Seiden \cite{Seiden02}, and finally \textsc{SonOfHarmonic} (performance ratio $1.5816$) by Heydrich and van Stee \cite{HS15}. Very recently, the bound has further been improved to $1.5783$ by the algorithm \textsc{AdvancedHarmonic} \cite{BBDEL17a}. The best general lower bound of 1.54037 for online bin packing in one dimension was given by Balogh et al. \cite{BBG11}.
%Ramanan et al. also gave a lower bound of $\frac{19}{12}>1.58333$ for a class of algorithms they call harmonic-type algorithms (which covers in particular \textsc{Harmonic++}) \cite{RBLL89}.

Online bin packing of rectangles was first discussed by Coppersmith and Raghavan \cite{CR89}. 
They gave an algorithm which has in two dimensions a performance ratio of $\frac{13}{4}$ for general rectangles and $\frac{43}{16}$ for squares. 
Additionally, they showed a lower bound of $\frac{4}{3}$ for square packing in any dimension $d \ge 2$. 
Csirik and van Vliet improved upon this by giving an algorithm that achieves $h_\infty^d$ performance ratio for any dimension $d \ge 2$ \cite{CV93}. 
They also show that this is a lower bound for bounded space algorithms, although their algorithm uses unbounded space. 
Later, Epstein and van Stee provided a bounded space algorithm that matches this lower bound \cite{ES05}. 
In the same paper, they also give an optimal online bounded space algorithm for box packing (i.e. items are not hypercubes anymore but can have different sizes in different dimensions), although they do not provide the exact performance ratio. 
Finally, Han et al. \cite{HCTZZ11} gave an upper bound of $2.5545$ for the special case of $d=2$, which is the best bound currently known.

The best known lower bounds for hypercube packing are $1.6406$ for two dimensions and $1.6680$ for three dimensions \cite{ES05b}. For box packing, the best known lower bounds are $1.851$ for two dimensions and $2.043$ for three dimensions \cite{vanVlietPhD}.
Regarding upper bounds, the best algorithm for square packing achieves a performance ratio of $2.1187$ and the best algorithm for cube packing achieves $2.6161$~\cite{HYZ10}. For rectangle packing, a 2.5545-competitive algorithm is known, as well as a 4.3198-competitive algorithm for online three dimensional box packing~\cite{HCTZZ11}.

\subsection{Our Contribution}

%We improve the general lower bound for square packing in two dimensions to $1.6707$.
%Furthermore, we improve the lower bound for Harmonic-type algorithms in any dimension $d\ge2$. 
%This uses a generalization of the method of Ramanan et al. \cite{RBLL89}. 
%In particular, we show that such an algorithm cannot break the barrier of $2$ for $d=2$, by giving a lower bound of $2.02$ for this case. 
%This shows that substantial new ideas will be needed in order to improve significantly on the current best upper bound of $2.1187$ and get close to the general lower bound.
We improve the general lower bound for square packing in two dimensions to $1.680783$. 
In the upcoming WAOA 2017, Epstein et al. improved this lower bound further to $1.75$, using different methods \cite{BBDEL17b}.
For rectangle packing, we improve the general lower bound to $1.859$.
%The previous lower bound \cite{ES05b} was constructed by using an input such that any item size in this input divides all larger item sizes. In this way, a lower bound could be proved analytically since it could be proved that only very few relevant patterns remained.
%The key idea behind our improvement is that any given input (with non-divisible item sizes) in a lower bound construction can be extended in a greedy manner, in such a way that the patterns can be easily determined using a formula without the need for an exhaustive search. This exhaustive search is then only needed for the first part of the input sequence, which makes it feasible.
%This allows us to use inputs with up to ten different item sizes, including very small sizes, which were infeasible to determine patterns for using the known approaches. 
%We also made several algorithmic improvements to the pattern search,
%dramatically improving the running time for many important patterns.
%
Furthermore, we improve the lower bound for Harmonic-type algorithms for hypercube packing in any dimension $d \ge 2$. 
This uses a generalization of the method of Ramanan et al. \cite{RBLL89}. 
In particular, we show that such an algorithm cannot break the barrier of $2$ for $d = 2$, by giving a lower bound of $2.02$ for this case. 
This shows that substantially new ideas will be needed in order to improve significantly on the current best upper bound
of $2.1187$ and get close to the general lower bound. 
Our lower bound tends to $3$ for large numbers of dimensions.

Lastly, we also show that even when incorporating two central ideas from the currently best one-dimensional bin packing algorithm \cite{HS15} into two-dimensional square packing, similar lower bounds as those for Harmonic-type algorithms can still be achieved.

\subsection{Preliminaries}

At several points in this paper, we use the notion of \emph{anchor points} as defined by Epstein and van Stee \cite{ES07}. 
We assign the coordinate $(0, \ldots, 0)$ to one corner of the bin, all edges connected to this corner are along a positive axis and have length 1. 
Placing an item at an anchor point means placing this item parallel to the axes such that one of its corners coincides with the anchor point and no point inside the item has a smaller coordinate than the corresponding coordinate of the anchor point. 
We call an anchor point \emph{blocked} for type $s$ items in a certain packing (i.e. in a bin that contains some items), if we cannot place an item of type $s$ at that anchor point (without overlapping other items).

\section{Lower Bound for General Algorithms for Square Packing}

\subsection{Van Vliet's Method}

For deriving a general lower bound on the performance ratio of online hypercube packing algorithms, we extend an approach by van Vliet \cite{vanVlietPhD} based on linear programming. 
Problem instances considered in this approach are characterized by a list of items $L=L_1\ldots L_k$ for some $k \ge 2$, where each sublist $L_j$ contains $\alpha_j \cdot n$ items of side length $s_j$ (we will also call such items ``items of size $s_j$'' or simply ``$s_j$-items''). 
We assume $s_1 \le \ldots \le s_k$. 
The input might stop after some sublist. 
An online algorithm $\mathcal{A}$ does not know beforehand at which point the input sequence stops, and hence the asymptotic performance ratio can be lower bounded by $$R \ge \min_\mathcal{A}\max_{j=1, \ldots, k}\limsup_{n\rightarrow\infty}\frac{\mathcal{A}(L_1, \ldots, L_j)}{OPT(L_1, \ldots, L_j)}$$

For this approach, we define the notion of a \textit{pattern}: A pattern is a multiset of items that fits in one bin. 
We denote a pattern by a tuple $(p_1, \ldots, p_k)$, where $p_i$ denotes the number of $s_i$-items contained in the pattern (possibly zero). 
We call a pattern $p$ \emph{dominant} if the multiset consisting of the items of $p$ and an additional item of the smallest item size that is used by $p$ cannot be packed in one bin. 
The performance of an online algorithm on the problem instances we consider can be characterized by the number of bins it packs according to a certain pattern. 
Van Vliet denotes the set of all feasible patterns by $T$, which is the union of the disjoint sets $T_1, \ldots, T_k$ where $T_j$ contains patterns whose first non-zero component is $j$ (i.e., whose smallest item size used is $s_j$). 
We can then calculate the cost of an algorithm $\mathcal{A}$ by $\mathcal{A}(L_1, \ldots, L_j)=\sum_{i=1}^{j}\sum_{p \in T_i}n(p)$, where $n(p)$ denotes the number of bins $\mathcal{A}$ packs according to pattern $p$. 
Note that we only need to consider dominant patterns in the LP \cite{vanVlietPhD}. 
As the variables $n(p)$ characterize algorithm $\mathcal{A}$, optimizing over these variables allows us to minimize the performance ratio over all online algorithms with the following LP:

\begin{equation*}
\begin{array}{ll@{}ll}
\text{minimize} & \,\, R\\
\text{subject to} & \hspace{-1mm} \displaystyle\,\,\,\sum_{p \in T}p_j \cdot x(p) \ge \alpha_j & \,\,\,\, 1 \le j \le k\\
& \, \, \displaystyle \sum_{i=1}^{j}\sum_{p \in T_i}  x(p) \le \lim\limits_{n \rightarrow \infty}\frac{OPT(L_1, \ldots, L_j)}{n}R & \,\,\,\, 1 \le j \le k\\
& x(p) \ge 0 & \,\,\,\,\forall p \in T
\end{array}
\end{equation*}

In this LP, the variables $x(p)$ replace $n(p)/n$, as we are only interested in results for $n\rightarrow\infty$.
Note that item sizes are always given in nondecreasing order to the algorithm. 
In this paper, however, we will often consider item sizes in \emph{nonincreasing} order for constructing the input sequence and generating all patterns.

\subsection{Proving a lower bound of $1.680783$}

In this section, we will prove the following theorem:

\begin{theorem}\label{thm:general_lb}
No online algorithm can achieve a competitive ratio of $1.680783$ for the online square packing problem.
\end{theorem}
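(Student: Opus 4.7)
The plan is to apply the LP framework of Subsection 2.1 directly to squares in two dimensions. The first step is to select a list of $k$ item sizes $s_1 < s_2 < \cdots < s_k$ together with multiplicities $\alpha_j n$. A natural and essentially forced choice is to take each $s_j$ just above a reciprocal $1/(m_j+1)$ for a strictly decreasing sequence of integers $m_1 > m_2 > \cdots > m_k$: with this choice, a bin can contain at most $m_j^2$ items of type $s_j$ arranged in the obvious grid, and raising $s_j$ slightly past $1/(m_j+1)$ forbids the denser $(m_j+1)^2$ packing. The $\alpha_j$ and the $m_j$ themselves are free parameters to be tuned so that the LP bound lands exactly at $1.680783$.

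The second step is to evaluate the two data that feed the LP: for each prefix $L_1\ldots L_j$ the limit $\lim_{n\to\infty} OPT(L_1,\ldots,L_j)/n$, and the set $T$ of dominant patterns. For the OPT calculation I would use a hierarchical offline packing: place the largest active items in their $m_j\times m_j$ grid, and fit smaller items into the strips of width roughly $1/(m_j+1)$ around the grid (possible because any $s_i$ with $i<j$ satisfies $s_i < 1/(m_j+1)$), iterating this idea down to $s_1$; this yields a closed-form OPT. For the pattern enumeration I would use the anchor-point formalism of Epstein and van Stee to restrict attention to structured packings built by placing larger items at anchor points and then recursively subdividing the induced sub-bins for smaller types. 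This restriction is without loss of generality when testing dominance in our regime, because items of each type fit only inside cells of coarser types, and it keeps $|T|$ tractable for moderate $k$.

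With these inputs in hand, the final step is to solve the resulting LP, most naturally by constructing a feasible dual solution of value at least $1.680783$ and verifying it symbolically or with exact rational arithmetic. The principal obstacle is not the LP solve itself but the combinatorial preparation underneath it: one has to choose $k$, the $m_j$'s and the $\alpha_j$'s carefully enough that the resulting LP actually achieves $1.680783$, strictly improving on the previous $1.6406$ of Epstein and van Stee, while keeping the enumeration of $T$ within reach. In addition, dominance in two dimensions cannot be decided by a simple volume or slot argument as in one dimension, so each candidate pattern needs an explicit geometric verification (either directly or through the anchor-point reduction above). Fine-tuning the parameter choice and certifying the resulting LP bound is where the bulk of the real work will lie.
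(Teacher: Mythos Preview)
Your plan matches the paper's approach: the same LP/dual framework, with the proof reducing to exhibiting a concrete item list, computing prefix OPT values, giving a feasible dual solution, and verifying the pattern constraints. Two points where your outline drifts from what the paper actually does are worth flagging.

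First, your rule ``take each $s_j$ just above a reciprocal $1/(m_j+1)$'' is not quite what is used. Nine of the ten sizes are indeed of the form $\frac{1}{m}+\frac{\epsilon}{m}$, but the smallest size is $s_1=\frac{1}{420}-\epsilon$, i.e.\ slightly \emph{below} a reciprocal. This is deliberate: all larger items have side lengths that are exact multiples of $(1+\epsilon)/420$, so the bin is tiled by a $420\times 420$ grid of anchor points and each $s_1$-item occupies exactly one free cell. That makes the OPT computations for every prefix fall out of a single anchor-point count rather than the recursive ``strips around a grid'' argument you sketch, and it also makes the pattern-feasibility checks mechanical. If you insist on all sizes being above reciprocals you lose this exact divisibility and the bookkeeping becomes messier.

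Second, the paper does not enumerate dominant patterns via an anchor-point reduction as you propose; instead it introduces an item-level dominance relation (write $m^2 s_i \succ s_j$ when $m s_i \le s_j$ and $m^2\lambda_i \ge \lambda_j$) tailored to the chosen dual weights $\lambda_i$. Because this relation is transitive and, for the specific $\lambda$'s, chains through almost all types, the knapsack $\max_{p\in T_j}\sum_i \lambda_i p_i$ collapses to a single pattern per $T_j$ (with one small exception requiring a short area argument). This is what makes the dual verification a table rather than a case analysis. Your plan of certifying each candidate pattern geometrically would work in principle but is substantially more labor; the weight-dominance trick is the idea that keeps the verification short.
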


\begin{table}
\centering
\begin{tabular}{|c|c|c|c|}
\hline 
sublist $L_i$ & number of items $\alpha_i$ & item size $s_i$ & $OPT(L_1\ldots L_i)\cdot 176400/n$\\ 
\hline 
$L_1$ & $839n$ & $1/420-\epsilon$ & $839$ \\ 
\hline 
$L_2$ & $10n$ & $1/105+\epsilon/105$ & $999$ \\ 
\hline 
$L_3$ & $8n$ & $1/84+\epsilon/84$ & $1199$ \\ 
\hline 
$L_4$ & $4n$ & $1/42+\epsilon/42$ & $1599$ \\ 
\hline 
$L_5$ & $39n$ & $1/21+\epsilon/21$ & $17199$ \\ 
\hline 
$L_6$ & $8n$ & $1/20+\epsilon/20$ & $20727$ \\ 
\hline 
$L_7$ & $4n$ & $1/10+\epsilon/10$ & $27783$ \\ 
\hline 
$L_8$ & $7n$ & $1/5+\epsilon/5$ & $77175$ \\ 
\hline 
$L_9$ & $5n$ & $1/4+\epsilon/4$ & $132300$ \\ 
\hline 
$L_{10}$ & $n$ & $1/2+\epsilon/2$ & $176400$ \\ 
\hline 
\end{tabular} 
\caption{The input sequence that gives a lower bound of $1.680783$ together with optimal solutions.}
\label{tab:input}
\end{table}

Consider the input sequence in Table \ref{tab:input}.
First of all, we need to prove the correctness of the values $\frac{OPT(L_1\ldots L_j)}{n}$ for $j=1,\ldots,k$ and $n\to\infty$. 
To prove a lower bound, we do not need to prove optimality of the offline packings that we use. It is sufficient to prove feasibility.
To do this, we use anchor packings. An anchor packing is a packing where every item is placed at an anchor point. In this section, we use $420^2$ anchor points. The anchor points are at the positions for which both coordinates are integer multiples of $(1+\epsilon)/420$.
Note that every item used in the construction apart from the ones in $L_1$ have sides which are exact multiples of $(1+\epsilon)/420$. Therefore, whenever we place an item at an anchor point, and the item is completely
contained within the bin, it will fill exactly a square bounded by anchor points on all sides.

To check whether a given pattern is feasible, the items of size $s_1$ can be considered separately. Having placed all other items at anchor points, we can place exactly one item of size $s_1$ at each anchor point which is still available. Here an anchor point $(x,y)$ is available if no item covers the point $(x+\epsilon,y+\epsilon)$. By the above,
after all other items have been placed at anchor points, it is trivial to calculate the number of available anchor points; at least all the anchor points with at least one coordinate equal to $(1+\epsilon)419/420$ are still available.

For any pattern that we use, the largest items in it are always arranged in a square grid at the left bottom corner of the bin (at anchor points). The second largest items are arranged in an L-shape around that square. It is straightforward to calculate the numbers of these items as well. 
The patterns used for the given upper bounds on the optimal solution are listed in Table \ref{tab:optimal-patterns}. Note that not all of these patterns are greedy (in the sense that we add, from larger to smaller items, always as many items of the current type as still fit).

Let us give some intuition on how these patterns are constructed. We start by finding a pattern that contains the maximal number of the largest type of items, and then add greedily as many items as possible of the second largest type, then third largest type and so on. We take as many bins with this pattern as are necessary to pack all the largest items; a certain number of items of all other types remain. We continue by choosing the pattern that contains the largest possible number of items of the second-largest type and fill it up greedily as before with other items. We use this pattern in such a number of bins that all remaining items of the second-largest type are packed. We continue like that until all items are packed.
You can see this approach for example in the patterns used for $OPT(L_1\ldots L_3)$. We can pack 6889 items of size $s_3$ into one bin. With these, we can pack no more than 207 $s_2$-items, and finally we can add at most $863$ $s_1$-items; this gives the first pattern. We need $n\cdot 8/6889$ bins with this pattern to pack the $8n$ $s_3$-items. This leaves $n \cdot 67234/6889$ items of size $s_2$ unpacked, and as we can pack at most $10816$ $s_2$-items into one bin (and $3344$ $s_1$-items with them), this gives a certain amount of bins with this second pattern $(3344,10816,0,\ldots,0)$.

However, we sometimes slightly derive from this construction, e.g., in the patterns used for $OPT(L_1\ldots L_5)$. In a bin with 400 items of size $s_5$, we could fit 81 items of size $s_4$. However, if we do so, we would pack more $s_4$-items than necessary and thus lose space that we need in order to pack other items. In that case, we reduce the number of $s_4$-items as much as possible while still packing all of them (in this case, we reduce it to 42).

In Figure \ref{fig:packing_p3}, we give the optimal packing for the whole input sequence (i.e., for $L_1\ldots L_{10}$).

\begin{figure}
\centering
\includegraphics[width=.9\textwidth]{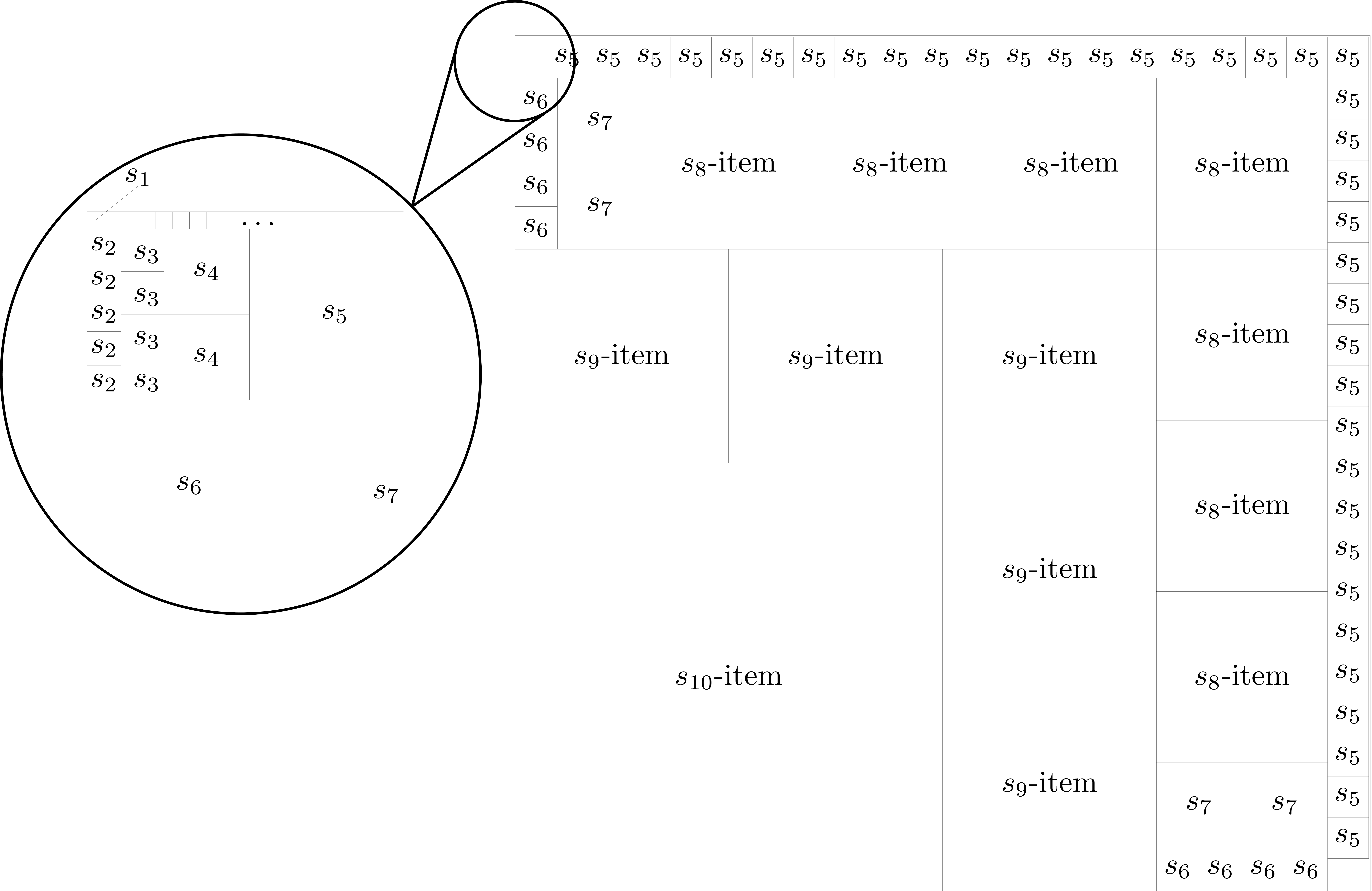}
\caption{How to pack the whole input sequence $L_1\ldots L_{10}$. Note that these sketches are not true to scale for the sake of readability.}\label{fig:packing_p3}
\end{figure}

\begin{table}
\centering
\begin{tabular}{|c|c|c|c|c|c|c|c|c|c|c|c|}
\hline
\multirow{2}{*}{$i$} & \multicolumn{10}{c|}{patterns used for $OPT(L_1\ldots L_i)/n$} & number of bins with this pattern\\
\cline{2-11}
& $s_1$ & $s_2$ & $s_3$ & $s_4$ & $s_5$ & $s_6$ & $s_7$ & $s_8$ & $s_9$ & $s_{10}$ & divided by $n$\\
\hline
$1$ & $176400$&$0$&$0$&$0$&$0$&$0$&$0$&$0$&$0$&$0$ & $ 839/176400$\\
\hline
\multirow{2}{*}{$2$} & $3344$&$10816$&$0$&$0$&$0$&$0$&$0$&$0$&$0$&$0$ & $ 10/10816$\\
\cline{2-11}
& $176400$&$0$&$0$&$0$&$0$&$0$&$0$&$0$&$0$&$0$ & $ 31393/6624800 $\\
\hline
\multirow{3}{*}{$3$} & $863$&$207$&$6889$&$0$&$0$&$0$&$0$&$0$&$0$&$0$ & $ 8/6889$\\
\cline{2-11}
& $3344$&$10816$&$0$&$0$&$0$&$0$&$0$&$0$&$0$&$0$ & $ 33617/37255712$\\
\cline{2-11}
& $176400$&$0$&$0$&$0$&$0$&$0$&$0$&$0$&$0$&$0$ & $ 1944236893/410744224800$\\
\hline
\multirow{4}{*}{$4$} & $863$&$207$&$165$&$1681$&$0$&$0$&$0$&$0$&$0$&$0$ & $ 4/1681$\\
\cline{2-11}
& $863$&$207$&$6889$&$0$&$0$&$0$&$0$&$0$&$0$&$0$ & $ 12788/11580409$\\
\cline{2-11}
& $3344$&$10816$&$0$&$0$&$0$&$0$&$0$&$0$&$0$&$0$ & $ 31961/37255712$\\
\cline{2-11}
& $176400$&$0$&$0$&$0$&$0$&$0$&$0$&$0$&$0$&$0$ & $ 646638631/136914741600$\\
\hline
$5$ & $10477$&$103$&$83$&$42$&$400$&$0$&$0$&$0$&$0$&$0$ & $ 39/400$\\
\hline
\multirow{2}{*}{$6$} & $839$&$16$&$8$&$4$&$39$&$361$&$0$&$0$&$0$&$0$ & $ 8/361$\\
\cline{2-11}
& $10493$&$102$&$83$&$42$&$400$&$0$&$0$&$0$&$0$&$0$ & $13767/144400$\\
\hline
\multirow{3}{*}{$7$} & $839$&$16$&$8$&$4$&$39$&$37$&$81$&$0$&$0$&$0$ & $ 4/81$\\
\cline{2-11}
& $839$&$16$&$8$&$4$&$39$&$361$&$0$&$0$&$0$&$0$ & $ 500/29241$\\
\cline{2-11}
& $10541$&$99$&$83$&$42$&$400$&$0$&$0$&$0$&$0$&$0$ & $ 13143/144400$\\
\hline
$8$ & $1918$&$23$&$18$&$10$&$90$&$19$&$10$&$16$&$0$&$0$ & $ 7/16$\\
\hline
\multirow{2}{*}{$9$} & $839$&$10$&$8$&$4$&$39$&$8$&$4$&$7$&$9$&$0$ & $5/9$\\
\cline{2-11}
& $1918$&$23$&$19$&$10$&$90$&$19$&$10$&$16$&$0$&$0$ & $ 7/36$\\
\hline
$10$ & $839$&$10$&$8$&$4$&$39$&$8$&$4$&$7$&$5$&$1$ & $1$\\
\hline
\end{tabular}
\caption{Patterns used for the optimal solutions.}
\label{tab:optimal-patterns}
\end{table}

In order to prove lower bounds, we will use the dual of the LP given above. It is defined as follows:

\begin{equation*}
\begin{array}{ll@{}ll}
\text{maximize} & \,\, \sum_{j=1}^{k}\alpha_j\lambda_j\\
\text{subject to} & \hspace{-1mm} \displaystyle\,\,\,\sum_{i=j}^{k}\lambda_ip_i + \sum_{i=j}^{k}\mu_i \le 0 & \,\,\,\, \forall p \in T_j, 1\le j \le k\\
& \, \, \displaystyle -\sum_{j=1}^{k}\mu_j \cdot \lim_{n \to \infty} \frac{OPT(L1\ldots L_j)}{n} \le 1 & \\
& \lambda_j \ge 0 & \,\,\,\, 1 \le j \le k\\
& \mu_j \le 0 & \,\,\,\, 1 \le j \le k
\end{array}
\end{equation*}

Note that any feasible solution to this dual gives us a valid lower bound for the problem. In Table \ref{tab:dual_solution}, we specify a solution and then prove that it is indeed feasible for the dual LP. In this table, the constant $x$ is defined as $4410/338989303$.

\begin{table}\centering
\begin{tabular}{|c|c|c|c|c|c|c|c|c|c|c|}
\hline 
$i$ & 1 & 2&3&4&5&6&7&8&9&10 \\ 
\hline 
$\lambda_i/x$ &1&16&25&100&400&400&1600&6400&6400&25600 \\
\hline
$-\mu_i/x$& 863&3312&4125&8100&15600&14800&27200&44800&32000&25600 \\
\hline
\end{tabular}
\caption{The variable values for the dual solution. Here, $x=4410/338989303$.}
\label{tab:dual_solution}
\end{table}

Note that the dual constraint
\begin{equation*}
-\sum_{j=1}^{k}\mu_j \cdot \lim_{n \to \infty} \frac{OPT(L1\ldots L_j)}{n} \le 1
\end{equation*}
is satisfied with equality.

It thus remains to check the other constraints, where we have one constraint for every pattern. For verifying that all constraints 
$\sum_{i=j}^{k}\lambda_ip_i + \sum_{i=j}^{k}\mu_i \le 0$
are satisfied, we see that it suffices to check that for every $j=1,\ldots, k$ the inequality
$\max_{p \in T_j} \sum_{i=j}^{k}\lambda_ip_i \le -\sum_{i=j}^{k}\mu_i$ holds. We can interpret the $\lambda_i$ values as weights assigned to items of type $i$, and thus the problem reduces to finding the pattern in $T_j$ with maximum weight -- a knapsack problem. The $\mu_i$-values define the capacity of the knapsack.
In order to solve this efficiently, we introduce a dominance notion for items.

\begin{definition}
We say that $m^2$ items of size $s_i$ dominate an item of size $s_j$, denoted by $m^2 s_i \succ s_j$, if $m s_i \le s_j$ and $m^2 \lambda_i \ge \lambda_j$.
\end{definition}

In the case that $m^2 s_i$-items dominate an $s_j$-item, we can replace one item of size $s_j$ by $m^2$ items of size $s_i$ (arranged in an $m\times m$ grid), as the items to do not take more space. Furthermore, the weight of the pattern only increases by this replacement step, so it suffices to only examine the pattern without $s_j$ items.
Note that the $\succ$-operator is transitive.

For our input, we use the following dominance relations:
\begin{align*}
4^2 s_1 \succ s_2 && 5^2 s_1 \succ s_3 && 2^2 s_3 \succ s_4 \\ 2^2 s_4 \succ s_5 && s_5\succ s_6 && 2^2 s_6 \succ s_7\\
2^2 s_7 \succ s_8 && s_8 \succ s_9 && 2^2 s_9 \succ s_{10}
\end{align*}

It is easy to check that these are indeed fulfilled by the $\lambda_i$-values given above. The dominance relations give us that whenever a pattern contains $s_1$-items, we can replace all other items in this pattern by $s_1$-items as well -- thus, for set $T_1$, we only need to consider the pattern that contains only $s_1$-items (and the maximal number of them, i.e., $176400$ such items). So using the dominance relation, we have reduced the number of patterns dramatically. Similarly, for $T_3, \ldots, T_{10}$, we only have to consider one pattern each. Only for $T_2$, we have to be careful: As $s_3$-items are not dominated by $s_2$ items, we also have to consider patterns that contain $s_2$ and $s_3$-items. The following Lemma will show that the pattern that contains $6889$ $s_3$-items and $207$ $s_2$-items is the maximum weight pattern for this case.

\begin{lemma}
Among all patterns that only contain items of sizes $s_2$ and $s_3$, the pattern with $6889$ $s_3$-items and $207$ $s_2$-items has the highest weight given the $\lambda_i$-values of Table \ref{tab:dual_solution}.
\end{lemma}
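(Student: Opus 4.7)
The plan is to leverage the algebraic identity $\lambda_2/s_2^2 = \lambda_3/s_3^2 = 176400/(1+\epsilon)^2$ (verifiable by direct computation, since $16 \cdot 105^2 = 25 \cdot 84^2 = 176400$), which implies that for any pattern consisting only of $s_2$- and $s_3$-items, the total weight $16 p_2 + 25 p_3$ equals the fixed constant $176400/(1+\epsilon)^2$ times the total occupied area $p_2 s_2^2 + p_3 s_3^2$. The claim is therefore equivalent to showing that no feasible packing of $s_2$- and $s_3$-items in the unit bin covers area more than $\frac{175537 \, (1+\epsilon)^2}{176400}$.

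To bound the covered area, I would reduce to a combinatorial packing problem. Since $s_2 = 4q$ and $s_3 = 5q$ for $q = (1+\epsilon)/420$, and $\lfloor 1/q \rfloor = 419$ for all sufficiently small $\epsilon>0$, a standard slide-down-and-left argument transforms any feasible packing into one in which every item is anchored at a cell of the $q$-grid, preserving the counts $(p_2,p_3)$. The question then becomes: maximize $16 p_2 + 25 p_3$ over packings of non-overlapping $4\times 4$ and $5\times 5$ grid-aligned squares in a $419\times 419$ integer grid. I would first show $p_3 \le 6889 = 83^2$ by an $L_\infty$-pigeonhole applied to the centers of the $5\times 5$ squares: these lie in the $414\times 414$ center-region and are pairwise $L_\infty$-separated by at least $5$, so at most $\lceil 414/5\rceil^2 = 83^2$ fit. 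For the extremal case $p_3 = 6889$, the complement of the $s_3$-items must decompose into strips of width at most $4$, which admit at most $\lfloor 419/4\rfloor + \lfloor 415/4\rfloor = 104 + 103 = 207$ non-overlapping $4\times 4$-squares; this yields the target $16\cdot 207 + 25\cdot 6889 = 175537$. For $p_3 < 6889$ the identity $25\,s_2^2 = 16\,s_3^2$ makes any $s_3\!\to\! s_2$ exchange weight-preserving per unit area, but such an exchange inevitably introduces additional packing waste in the freed $5\times 5$ region, so the total weight strictly falls below $175537$.

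The main obstacle is the extremal claim $p_2\le 207$ for $p_3 = 6889$: the pigeonhole only forces one $s_3$-center per meta-cell, and each center may vary inside its $5\times 5$ meta-cell, so the free region is not a priori an L-shape. I would address this by exploiting the monotone-offset structure forced by non-overlap (within each meta-row the $x$-offsets of successive $s_3$-items are non-decreasing, and symmetrically for meta-columns), together with a per-strip count showing that regardless of this internal flexibility the free region can never contain two disjoint width-$4$ substrips long enough to accommodate a total of $208$ $4\times 4$-squares. A secondary technical point is the slide-to-grid reduction itself, which I would handle by iterating a push-down-left operation in a topological order on the items given by the Pareto order on their lower-left corners.
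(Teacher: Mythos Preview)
Your reduction to an area-maximization problem via the identity $\lambda_2/s_2^2=\lambda_3/s_3^2$ is precisely the paper's idea (the paper writes a strict inequality, but your equality is what the numbers actually give). The paper then disposes of the area bound in one sentence, simply asserting that no $\{s_2,s_3\}$-pattern can cover more area than $p^*$; you instead attempt a full combinatorial proof via a slide-to-grid reduction to packing $4\times4$ and $5\times5$ integer squares in a $419\times419$ board.

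The genuine gap in your route is the $p_3<6889$ case. Since weight equals a fixed constant times covered area, the split on $p_3$ is immaterial: what must be shown is that \emph{no} $\{4,5\}$-square packing of the $419\times419$ board covers more than $175537$ cells. Your claim that ``an $s_3\!\to\!s_2$ exchange is area-preserving but introduces waste'' only bounds packings \emph{obtained from} a $p_3=6889$ packing by that particular local move; it says nothing about arbitrary packings with fewer $5\times5$ squares. For instance, both $20\times19$ and $20\times20$ rectangles tile exactly with $4$'s and $5$'s, so one can push all the waste in the $419\times419$ board into a single $19\times19$ corner while using far fewer than $6889$ large squares; ruling out coverage exceeding $175537$ then amounts to showing this corner forces at least $24$ wasted cells, which your exchange heuristic does not address. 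Your $p_3=6889$ branch is closer to a real argument, but as you yourself note, establishing $p_2\le207$ without rigidity of the $83\times83$ block is the crux, and the monotone-offset sketch does not yet close it either.
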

\begin{proof}
Let $p^*$ be the pattern under consideration. We note that $\frac{\lambda_2}{(1/105)^2}<\frac{\lambda_3}{(1/84)^2}$, i.e., the weight per area is larger for the $s_3$-items than for the $s_2$-items. Furthermore, $p^*$ occupies an area of $\left(\frac{104}{105}\cdot (1+\epsilon)\right)^2$. Note that no pattern with these two item types can cover a larger area. Hence, no pattern can achieve a larger weight.
\end{proof}
In Table \ref{tab:patterns_for_dual}, we list all patterns that need to be checked, together with their weight and the knapsack capacity.

\begin{table}
\centering
\begin{tabular}{|c|c|c|c|}
\hline 
$j$ & heaviest pattern $p$ from $T_j$ & $w(p)/x$ & knapsack capacity: $(\sum_{i=j}^{10}-\mu_i)/x$ \\ 
\hline 
1 & $176400 \times s_1$ & $176400$ & $176400$ \\ 
\hline 
2 & $207 \times s_2, 6889 \times s_3$ & $175537$ & $175537$ \\ 
\hline 
3 & $6889 \times s_3$ & $172225$ & $172225$ \\ 
\hline 
4 & $1681 \times s_4$ & $168100$ & $168100$ \\ 
\hline 
5 & $400 \times s_5$ & $160000$ & $160000$ \\ 
\hline 
6 & $361 \times s_6$ & $144400$ & $144400$ \\ 
\hline 
7 & $81 \times s_7$ & $129600$ & $129600$ \\ 
\hline 
8 & $16 \times s_8$ & $102400$ & $102400$ \\ 
\hline 
9 & $9 \times s_9$ & $57600$ & $57600$ \\ 
\hline 
10 & $1\times s_{10}$ & $25600$ & $25600$ \\ 
\hline 
\end{tabular} 
\caption{The patterns that have to be considered to verify the first set of constraints in the dual LP. Again, $x=4410/338989303$.}
\label{tab:patterns_for_dual}
\end{table}

Finally, in order to determine the lower bound proven by this input, we compute $839\lambda_1 + 10\lambda_2 + 8\lambda_3 + 4\lambda_4 + 39\lambda_5+8\lambda_6+4\lambda_7+7\lambda_8+5\lambda_9+\lambda10 = 569767590/338989303 > 1.680783$. This concludes the proof of Theorem \ref{thm:general_lb}.

\section{Lower Bounds for General Algorithms for Rectangle Packing}

In this section, we present a lower bound on the more general two-dimensional bin packing, where items are allowed to be arbitrary rectangles and not necessarily squares. In this setting, we receive a sequence of $n$ items, where the $i$-th item has width $w_i$ and height $h_i$. The bins are still squares of side length one, and we are not allowed to rotate the items. Note that the LP and its dual are still the same, however, we need to adapt our definition of item dominance as follows.
\begin{definition}
We say that $m_1 \times m_2$ items of size $s_i$ dominate an item of size $s_j$, denoted by $(m_1 \times m_2) s_i \succ s_j$, if $m_1 w_i \le w_j, m_2 h_i \le h_j$ and $m_1m_2\lambda_i \ge \lambda_j$. Instead of $(1 \times 1) s_i \succ s_j$, we will simply write $s_i \succ s_j$.
\end{definition}
Again, this means that we can replace one item of size $s_j$ by $m_1m_2$ items of size $s_i$ which are arranged in an $m_1 \times m_2$-grid, while only increasing the weight (sum of $\lambda$-values) of the pattern.

%For showing a lower bound of 1.907, we will need an input consisting of 15 item sizes. However, we will build this input slowly, starting with a shorter input sequence achieving a lower bound of 1.859, and then adding more items to obtain the final lower bound.

\subsection{A lower bound of 1.859 using nine item types}

The construction for our lower bound relies on nine item types that are arranged in 3 groups, also called \textit{levels}. Corresponding to these item types, we have nine lists $L_1, \ldots, L_9$, where each $L_j$ consists of $n$ items of size $s_j$. The item sizes are given in Table \ref{tab:input_rect}.

\begin{table}
\centering
\begin{tabular}{|c|c|c|}
\hline
\multicolumn{3}{|c|}{Level 1}\\
\hline 
$j$ & $w_j$ & $h_j$ \\ 
\hline 
1 & $1/4-300\delta$ & \multirow{3}{*}{$1/6-2\epsilon$} \\ 
\cline{1-2}
2 & $1/4+100\delta$ &  \\ 
\cline{1-2}
3 & $1/2+200\delta$ &  \\ 
\hline 
\end{tabular} 
\begin{tabular}{|c|c|c|}
\hline
\multicolumn{3}{|c|}{Level 2}\\
\hline 
$j$ & $w_j$ & $h_j$ \\ 
\hline 
4 & $1/4-30\delta$ & \multirow{3}{*}{$1/3+\epsilon$} \\ 
\cline{1-2}
5 & $1/4+10\delta$ &  \\ 
\cline{1-2}
6 & $1/2+20\delta$ &  \\ 
\hline 
\end{tabular} 
\begin{tabular}{|c|c|c|}
\hline
\multicolumn{3}{|c|}{Level 3}\\
\hline 
$j$ & $w_j$ & $h_j$ \\ 
\hline 
1 & $1/4-3\delta$ & \multirow{3}{*}{$1/2+\epsilon$} \\ 
\cline{1-2}
2 & $1/4+\delta$ &  \\ 
\cline{1-2}
3 & $1/2+2\delta$ &  \\ 
\hline 
\end{tabular} 
\caption{The input sequence for the 1.859 lower bound. The tables show the items for the first, second and third level. $\epsilon$ and $\delta$ are assumed to be sufficiently small positive constants.}
\label{tab:input_rect}
\end{table}

First of all, we will again give the values for $OPT(L_1\ldots L_j)$ in Table \ref{tab:opt_rect}. The packings for the optimal solution are depicted in Figure \ref{fig:opt_rect}. 
%The optimality of these packings follows by a similar argumentation as before: Assume that items of type $i$ were shrinked to width $w'_i$ and height $h'_i$ as given in Table \ref{tab:shrink_rect}. Then each item occupies a certain number of grid cells of size $w_1 \times h_1$, given also in the table, and the total number of grid cells required divided by the number of grid cells per bin (24) gives the lower bound on the optimal solution which in all cases matches the upper bound given by the packings in Figure \ref{fig:opt_rect}.

\begin{table}
\centering
\begin{tabular}{|c|c|c|c|c|c|c|c|c|c|}
\hline 
$j$ & 1 & 2 & 3 & 4 & 5 & 6 & 7 & 8 & 9 \\ 
\hline 
$OPT(L_1\ldots L_j)/n$ & $1/24$ & $1/12$ & $1/6$ & $1/4$ & $1/3$ & $1/2$ & $5/8$ & $3/4$ & $1$ \\ 
\hline 
\end{tabular} 
\caption{The optimal solution values for the 1.859 lower bound.}
\label{tab:opt_rect}
\end{table}

%\begin{table}
%\centering
%\begin{tabular}{|c|c|c|c|}
%\hline 
%type $i$ & $w'_i$ & $h'_i$ & number of cells occupied: $n_i$ \\ 
%\hline 
%$1$ & $1/4-300\delta$ & $1/7+\epsilon$ & $1$ \\ 
%\hline 
%$2$ & $1/4-300\delta$ & $1/7+\epsilon$ & $1$ \\ 
%\hline 
%$3$ & $1/2-600\delta$ & $1/7+\epsilon$ & $2$ \\ 
%\hline 
%$4$ & $1/4-300\delta$ & $1/3+\epsilon$ & $2$ \\ 
%\hline 
%$5$ & $1/4-300\delta$ & $1/3+\epsilon$ & $2$ \\ 
%\hline 
%$6$ & $1/2-600\delta$ & $1/3+\epsilon$ & $4$ \\ 
%\hline 
%$7$ & $1/4-300\delta$ & $1/2+\epsilon$ & $3$ \\ 
%\hline 
%$8$ & $1/4-300\delta$ & $1/2+\epsilon$ & $3$ \\ 
%\hline 
%$9$ & $1/2-600\delta$ & $1/2+\epsilon$ & $6$ \\ 
%\hline 
%\end{tabular} 
%\caption{The shrunk item sizes.}
%\label{tab:shrink_rect}
%\end{table}

\begin{figure}
\centering
\subfloat[$OPT(L_1)$]{\includegraphics[width=0.2\textwidth]{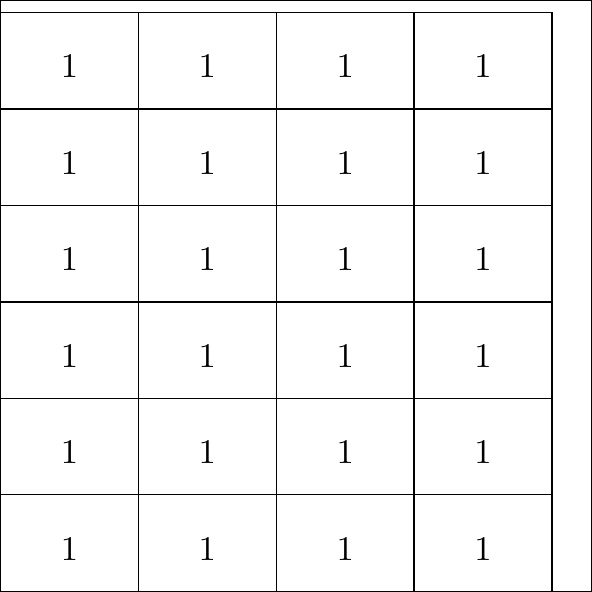}
}
\quad
\subfloat[$OPT(L_1L_2)$]{\includegraphics[width=0.2\textwidth]{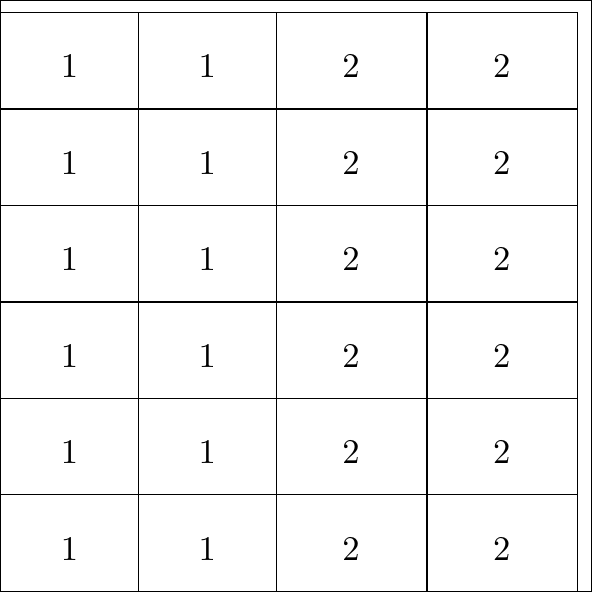}
}
\quad 
\subfloat[$OPT(L_1L_2L_3)$]{\includegraphics[width=0.2\textwidth]{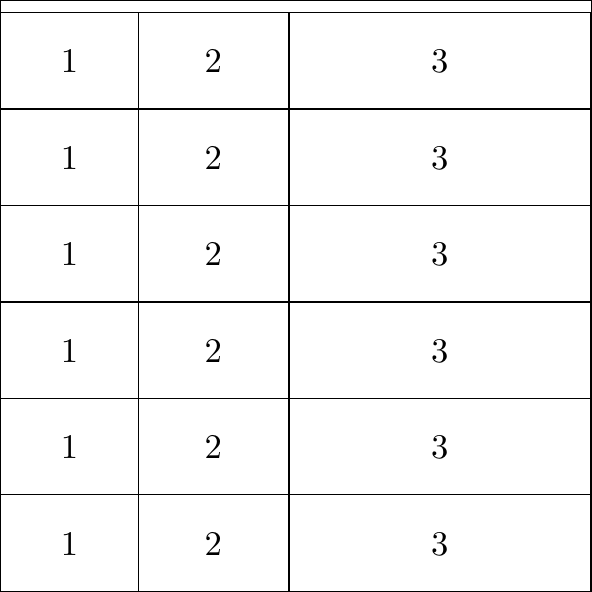}}

\subfloat[$OPT(L_1\ldots L_4)$]{\includegraphics[width=0.2\textwidth]{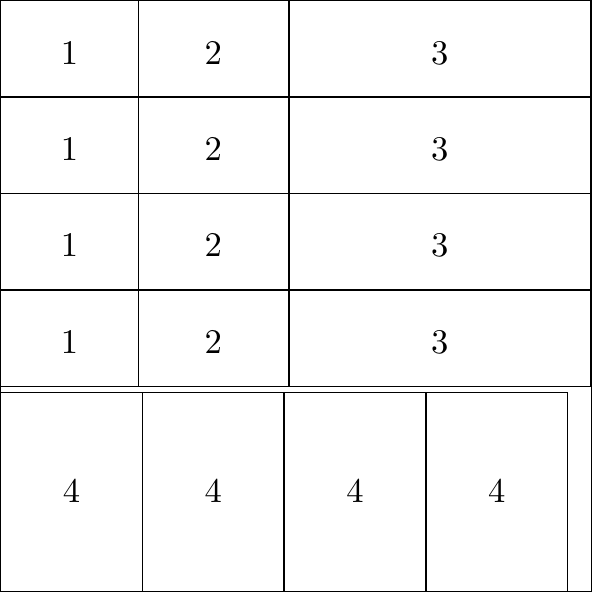}
}
\quad
\subfloat[$OPT(L_1\ldots L_5)$: We have $n/4$ bins with the left packing and $n/12$ bins with the right packing]{\includegraphics[width=0.2\textwidth]{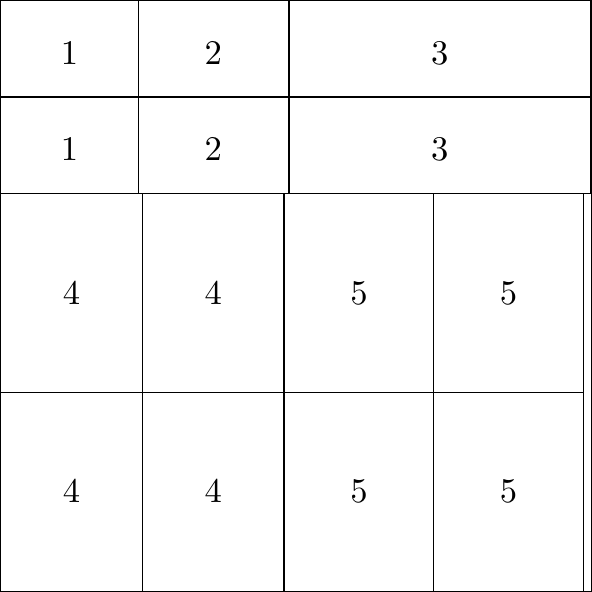}$\,\,$\includegraphics[width=0.2\textwidth]{opt_rect_3_sixth}
}

\subfloat[$OPT(L_1\ldots L_6)$]{\includegraphics[width=0.2\textwidth]{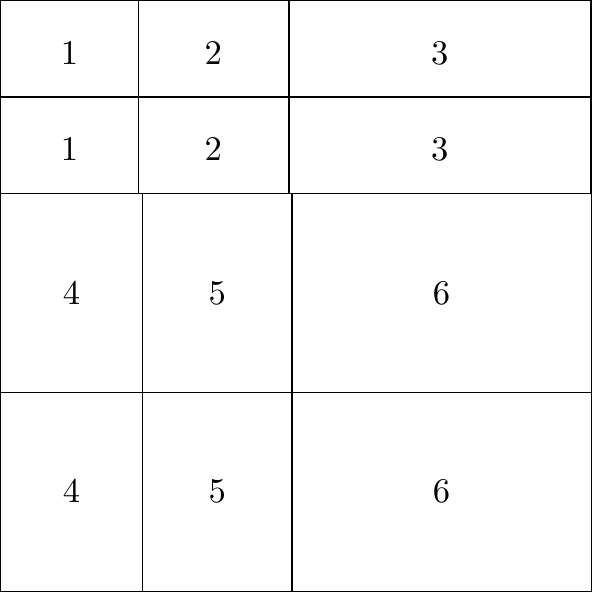}}
\quad
\subfloat[$OPT(L_1\ldots L_7)$: We have $n/4$ bins with the left packing and $3/8\cdot n$ bins with the right packing]{\includegraphics[width=0.2\textwidth]{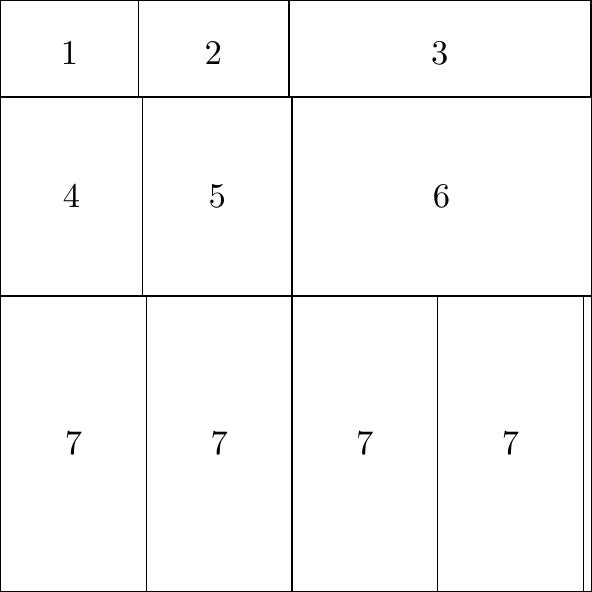}$\,\,$ \includegraphics[width=0.2\textwidth]{opt_rect_6_sixth}
}

\subfloat[$OPT(L_1\ldots L_8)$: We have $n/2$ bins with the left packing and $n/4$ bins with the right packing]{\includegraphics[width=0.2\textwidth]{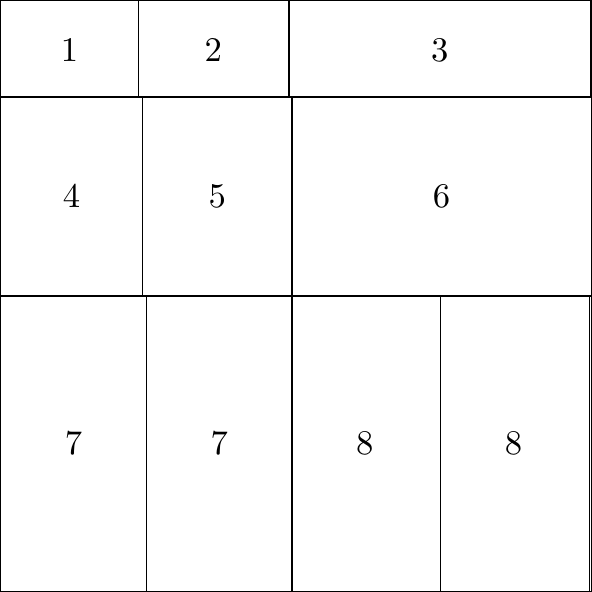}$\,\,$ \includegraphics[width=0.2\textwidth]{opt_rect_6_sixth}
}
\quad 
\subfloat[$OPT(L_1\ldots L_9)$]{\includegraphics[width=0.2\textwidth]{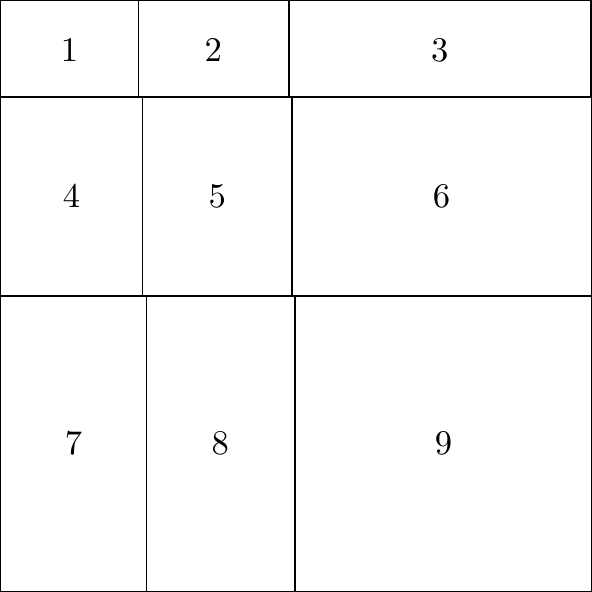}}
\caption{Optimal solutions for sublists $L_1\ldots L_j$ for all $j=1,\ldots,9$. The number within every item denotes its type.}
\label{fig:opt_rect}
\end{figure}

We will now give an optimal solution for the primal LP and then verify its feasibility by checking the constraints, and verify its optimality by giving a matching dual solution and a proof of its feasibility. The optimal primal solution uses 15 different patterns as listed in Table \ref{tab:patterns_rect}. The packings for some of these (where it is not that easy to see how to pack the pattern) are depicted in Figure \ref{fig:patterns_rect}.

\begin{table} \centering
\begin{tabular}{|c|ccccccccc|c|c|}
\hline 
pattern $p$ & $p_1$ & $p_2$ & $p_3$ & $p_4$ & $p_5$ & $p_6$ & $p_7$ & $p_8$ & $p_9$ & type & $x(p)$ \\ 
\hline 
$p^{(1)}$ & 24 & - & - & - & - & - & - & - & - & \multirow{3}{*}{$T_1$} & $29/4956$ \\ 
\cline{1-10}\cline{12-12}
$p^{(2)}$ & 12 & 12 & - & - & - & - & - & - & - &  & $289/4956$ \\ 
\cline{1-10}\cline{12-12}
$p^{(3)}$ & 12 & - & 6 & - & - & - & - & - & - &  & $11/826$ \\ 
\hline 
$p^{(4)}$ & - & 6 & 6 & - & - & - & - & - & - & \multirow{3}{*}{$T_2$} & $15/413$ \\ 
\cline{1-10}\cline{12-12}
$p^{(5)}$ & - & 2 & 2 & 4 & 4 & - & - & - & - &  & $17/1239$ \\ 
\cline{1-10}\cline{12-12}
$p^{(6)}$ & - & 2 & 2 & 4 & - & 2 & - & - & - &  & $34/1239$ \\ 
\hline 
$p^{(7)}$ & - & - & 4 & 2 & 4 & - & - & - & - & $T_3$ & $64/413$ \\ 
\hline 
$p^{(8)}$ & - & - & - & 8 & - & - & - & - & - & \multirow{3}{*}{$T_4$} & $55/1239$ \\ 
\cline{1-10}\cline{12-12}
$p^{(9)}$ & - & - & - & 2 & 2 & 2 & - & - & - &  & $74/1239$ \\ 
\cline{1-10}\cline{12-12}
$p^{(10)}$ & - & - & - & 1 & 1 & 1 & 4 & - & - &  & $21/413$ \\ 
\hline 
$p^{(11)}$ & - & - & - & - & 1 & 1 & 2 & 2 & - & \multirow{3}{*}{$T_5$} & $32/413$ \\ 
\cline{1-10}\cline{12-12}
$p^{(12)}$ & - & - & - & - & 1 & 1 & 4 & - & - &  & $3/413$ \\ 
\cline{1-10}\cline{12-12} 
$p^{(13)}$ & - & - & - & - & 1 & 1 & 1 & 1 & 1 &  & $29/413$ \\ 
\hline 
$p^{(14)}$ & - & - & - & - & - & 2 & 1 & 1 & - & $T_6$ & $128/413$ \\ 
\hline 
$p^{(15)}$ & - & - & - & - & - & - & 1 & 1 & 1 & $T_7$ & $96/413$ \\ 
\hline 
$p^{(16)}$ & - & - & - & - & - & - & - & 1 & 1 & $T_8$ & $96/413$ \\ 
\hline 
$p^{(17)}$ & - & - & - & - & - & - & - & - & 1 & $T_9$ & $192/413$ \\ 
\hline 
\end{tabular} 
\caption{The optimal primal solution for the 1.859 lower bound. The table gives the patterns, the set $T_j$ they belong to, and the value of the LP variable $x(p)$ for each pattern.}
\label{tab:patterns_rect}
\end{table}

\begin{figure}
\subfloat[$p^{(6)}$]{\includegraphics[width=0.3\textwidth]{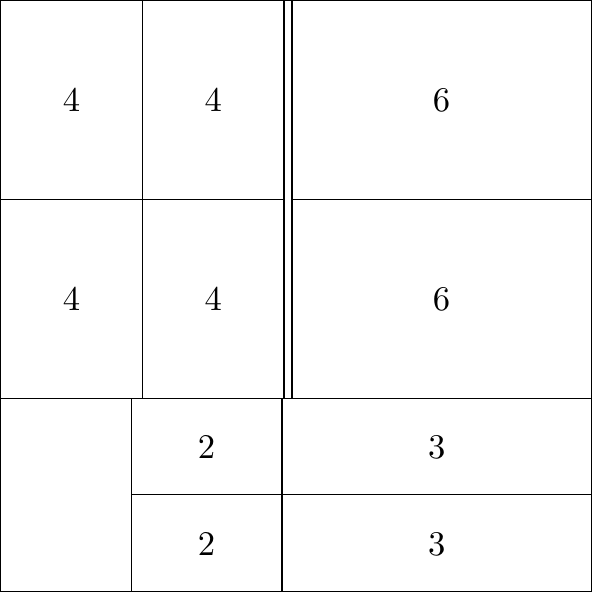}}
\quad 
\subfloat[$p^{(7)}$]{\includegraphics[width=0.3\textwidth]{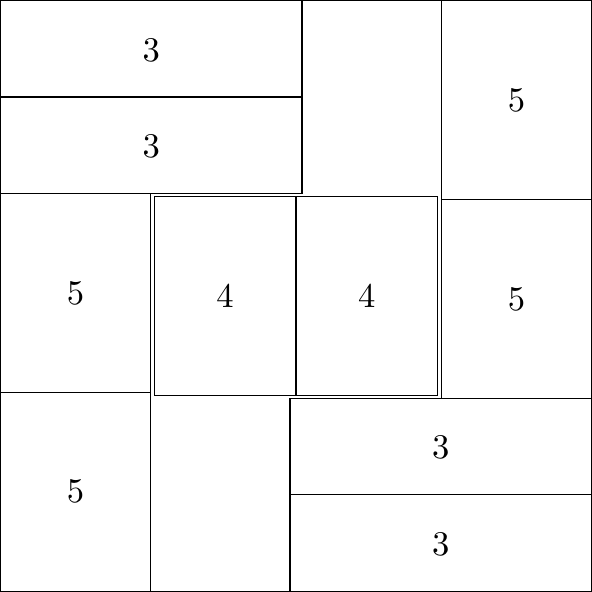}}
\quad 
\subfloat[$p^{(14)}$]{\includegraphics[width=0.3\textwidth]{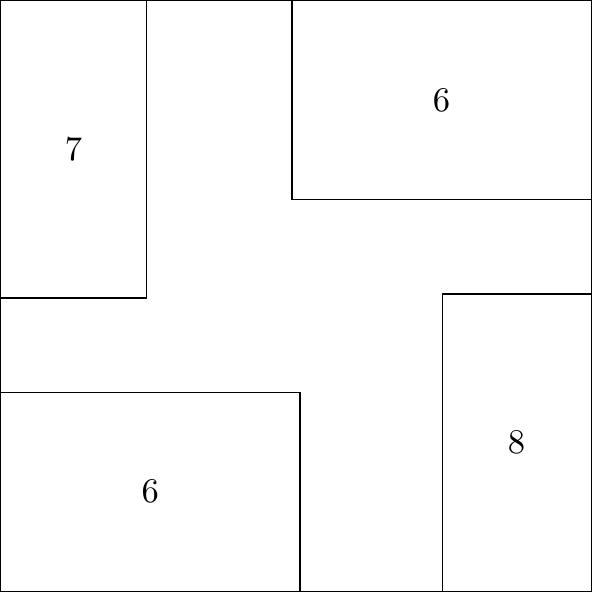}\label{subfig:patterns_rect_14}}
\caption{Packings of some of the patterns of the optimal LP solution. The number within each item denotes its type.}
\label{fig:patterns_rect}
\end{figure}

To verify feasibility of this solution, note that all the constraints hold with equality (except for the non-negativity constraints of course) if we set $R=768/413 > 1.859$. For proving optimality, we use the dual solution given in Table \ref{tab:dual_solution_rect}. In order to verify its feasibility, note that the dual LP constraint 
\begin{equation*}
-\sum_{j=1}^{k}\mu_j \cdot \lim_{n \to \infty} \frac{OPT(L1\ldots L_j)}{n} \le 1
\end{equation*}
is satisfied with equality. It remains to check the other dual constraints, where again we have to test whether $\max_{p \in T_j} \sum_{i=j}^k\lambda_ip_i \le -\sum_{i=j}^k\mu_i$ for all $j=1,\ldots ,k$. For solving the associated knapsack problem, we use the following dominance relations to simplify our task:
\begin{center}
$\begin{array}{lll}
s_1 \succ s_2 \quad & (2\times 1) s_2 \succ s_3 \quad & (1\times 2) s_1 \succ s_4 \\ 
s_4 \succ s_5 \quad & (2\times 1) s_5\succ s_6 \quad & s_4 \succ s_7\\
s_7 \succ s_8 \quad & (2\times 1) s_8 \succ s_9 \quad&
\end{array}$
\end{center}
We list the optimal patterns to be considered for the knapsack problem in Table \ref{tab:patterns_for_dual_rect}. In the following lemmas, we will prove that it suffices to consider these patterns.

\begin{table}
\centering
\begin{tabular}{|c|c|c|c|c|}
\hline 
$j$ & types to consider & heaviest patterns $p$ from $T_j$ & $w(p)\cdot 413$ & $(\sum_{i=j}^9 -\mu_i)\cdot 413$ \\ 
\hline 
1 & 1 & $24\times s_1$ & 1152 & 1152 \\ 
\hline 
\multirow{3}{*}{2} & \multirow{3}{*}{2, 4} & $18 \times s_2$ & \multirow{3}{*}{864} & \multirow{3}{*}{864} \\ 
\cline{3-3}
&& $6\times s_2, 8\times s_4$ &  & \\
\cline{3-3}
&& $12\times s_2, 4\times s_4$ &  & \\
\hline
3 & 3, 4 & $4\times s_3, 6\times s_4$ & 816 & 816 \\ 
\hline 
4 & 4 & $8\times s_4$ & 576 & 576 \\ 
\hline 
5 & 5,7& $3\times s_5, 4\times s_7$ & 504 & 504 \\ 
\hline 
\multirow{2}{*}{6} & \multirow{2}{*}{6,7} & $2\times s_6, 2\times s_7$ & \multirow{2}{*}{432} & \multirow{2}{*}{432} \\ 
\cline{3-3} 
&& $1\times s_6, 4\times s_7$ & & \\
\hline
7 & 7& $4\times s_7$ & 288 & 288 \\ 
\hline 
8 &8& $3\times s_8$ & 216 & 216 \\ 
\hline 
9 &9& $1\times s_9$ & 144 & 144 \\ 
\hline 
\end{tabular} 
\caption{The patterns that need to be considered for verifying the feasibility of the dual solution, together with their weight and the knapsack capacity.}
\label{tab:patterns_for_dual_rect}
\end{table}

\begin{lemma}
For $T_2$, the patterns $p^{(1)}=(0,6,0,8,0,\ldots,0), p^{(2)}=(0,12,0,4,0,\ldots,0),$ and $p^{(3)}=(0,18,0,\ldots, 0)$ maximize $w(p) = \sum_{i=2}^k\lambda_i p_i$, given the $\lambda_i$-values from Table \ref{tab:dual_solution_rect}.
\end{lemma}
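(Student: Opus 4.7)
The plan is to combine the dominance relations already established in the excerpt (which reduce any pattern in $T_2$ to one using only items of types $s_2$ and $s_4$) with an explicit feasibility check of the three listed patterns and a pointwise-then-integrated inequality on horizontal slices. Throughout I write $a$ for the number of $s_2$-items and $b$ for the number of $s_4$-items in a generic $T_2$-pattern.

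First I would verify that each of the three patterns is feasible and has weight $864/413$. The pattern $(0,18,0,\ldots)$ stacks six bands of three $s_2$-items, requiring height $6(1/6-2\epsilon)<1$ and width $3(1/4+100\delta)<1$. The pattern $(0,12,0,4,0,\ldots)$ places one bottom band of four $s_4$-items (height $1/3+\epsilon$, width $1-120\delta$) with four bands of three $s_2$-items above. The pattern $(0,6,0,8,0,\ldots)$ places two bottom bands of four $s_4$-items and two bands of three $s_2$-items on top. Using the $\lambda$-values from Table~\ref{tab:dual_solution_rect}, each of the three has weight exactly $18\lambda_2=864/413$; comparing any two of these weight equations reveals $\lambda_4=\tfrac{3}{2}\lambda_2$, whence the weight of every $T_2$-pattern equals $\tfrac{\lambda_2}{2}(2a+3b)$.

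For the matching upper bound, I would count items along horizontal lines. For a feasible packing and each height $y\in(0,1)$, let $N_2(y)$ and $N_4(y)$ denote the number of $s_2$- and $s_4$-items whose interior contains $y$. Width feasibility at each height gives
\[
N_2(y)\bigl(\tfrac{1}{4}+100\delta\bigr)+N_4(y)\bigl(\tfrac{1}{4}-30\delta\bigr)\;\le\;1.
\]
A finite enumeration of the nonnegative integer pairs satisfying this inequality shows $12N_2(y)+9N_4(y)\le 36$ pointwise, with equality only at $(3,0)$ and $(0,4)$; the apparent spoilers $(3,1)$, $(2,2)$, $(1,3)$ are excluded because their widths exceed $1$ by $270\delta$, $140\delta$, $10\delta$ respectively. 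Integrating over $[0,1]$ and substituting $\int_0^1 N_2\,dy=a(1/6-2\epsilon)$ and $\int_0^1 N_4\,dy=b(1/3+\epsilon)$ yields
\[
2a+3b\;\le\;36+\epsilon\,(24a-9b).
\]
Since the trivial area bound forces $a$ and $b$ into a range independent of $\epsilon$, for $\epsilon$ sufficiently small the correction term has absolute value strictly less than $1$, and integrality of $2a+3b$ then forces $2a+3b\le 36$, i.e.\ weight at most $18\lambda_2=864/413$, which is attained by each of the three listed patterns.

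The main obstacle lies in the pointwise enumeration: several small integer pairs come within $O(\delta)$ of violating width feasibility while realising values of $12N_2+9N_4$ at or above $36$, so the case check depends essentially on the strict positivity of $\delta$ and must be done carefully. By contrast, the integration step and the integrality rounding are robust once the pointwise bound is in hand, and do not depend on the precise value of $\epsilon$ beyond its being small enough.
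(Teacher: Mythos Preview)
Your proof is correct, and the underlying idea—counting items along horizontal slices—is the same one the paper uses. The execution, however, is genuinely different. The paper splits into cases according to the number $b$ of type~4 items (namely $b=8$, $4\le b\le 7$, and $b\le 3$), and in each case bounds the empty volume left by the type~4 rows to cap the number of type~2 items; it then lists the resulting dominant patterns and computes their weights individually. You instead identify a single linear combination, $12N_2(y)+9N_4(y)\le 36$, that holds at every height and—crucially—is proportional to the weight functional itself (since $\lambda_4=\tfrac32\lambda_2$). Integrating this one inequality and rounding by integrality yields the bound $2a+3b\le 36$ in one stroke, with no case split.

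What your route buys is economy and transparency: the choice of coefficients $(12,9)$ is not ad hoc but is exactly the weight vector scaled so that the extremal slice configurations $(3,0)$ and $(0,4)$ both hit the cap. The paper's route has the slight side benefit of explicitly enumerating all dominant $(a,b)$ patterns along the way, which is more informative but unnecessary for the lemma as stated. Your handling of the $\epsilon$-correction via integrality is also cleaner than the paper's implicit limiting arguments.
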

\begin{proof}
%First, observe that $\frac{\lambda_2}{h_2\cdot w_2} > \frac{\lambda_4}{h_4\cdot w_4}$, i.e., type $2$ items have a higher weight per area than type $4$ items. Therefore, the pattern that only contains the maximum number of $s_2$-items is a promising candidate for the heaviest pattern. However, the weight per area-argument alone is not enough to find the heaviest pattern, as other patterns might have a larger area and thus a larger total weight even if the average weight/area is lower.
%
%Therefore, let us find the pattern with maximum occupied area possible. Consider a horizontal line crossing a bin, and examine the maximal total width of items that can be crossed by such a line. It is rather easy to see that the maximum width is $1-120\delta$ when having $4$ items of size $s_4$, the second maximum width is $3/4+300\delta$ when having three items of size $s_2$ (under the assumption that $\delta$ is sufficiently small). Analogously, consider a vertical line crossing the bin, then the maximal total height of items crossed by this line is $20/21+4\epsilon$, given by 2 $s_2$-items and 2 $s_4$-items. From this argumentation, it is easy to see that the pattern that fills the maximal area in one bin is the pattern that contains 6 $s_2$-items and 8 $s_4$-items: Pack the $s_4$-items in two rows with four items each at the bottom of the bin and the $s_2$-items in two rows with three items each on top of that.
In this proof, we abbreviate patterns by listing only their second and fourth components.
Any vertical line through a bin can intersect with at most two type 4 items, and any horizontal line with at most four. By arranging the items in two rows of four, we see that $(0,8)$ is a dominent pattern.
There are only two options for a horizontal line in a bin that contains only
type 2 and type 4 items:
\begin{itemize}
\item it crosses (at most) four type 4 items
\item it crosses at most three items (of any type).
\end{itemize}
It is easy to see that $(0,8)$ is a (dominant) pattern: we can have
two rows of four items.
In general, if a bin contains eight type 4 items, there is at least a height of
$1/3-2\varepsilon$ where horizontal lines cross with at most three items
(this height can be more if the type 4 items are not exactly aligned).
This in turn implies that a volume of at least $(1/3-2\varepsilon)(1/4-300\delta)$
must remain empty in any bin that contains eight type 4 items, as the maximum total width of a set of three items is $3/4+300\delta$. It follows immediately
that $(6,8)$ is a dominant pattern, as the free space in the packing tends to exactly $1/3\times 1/4$ if $\varepsilon \rightarrow 0$ and $\delta \rightarrow 0$ (and since it is indeed a pattern).

If there is a total height of more than $1/3+\varepsilon$ at which a horizontal line intersects with four items, then by considering the highest and the lowest such line, we can identify eight distinct type 4 items.
Therefore, in a bin with four to seven type 4 items, at a height of at least $2/3-\varepsilon$, a horizontal line intersects with at most three items, since you can only have one row of four type 4 items. Therefore, in such a bin, there must be $(2/3-\varepsilon)(1/4-300\delta)$ of empty space. We find the following patterns:
$(6,7),(8,6),(10,5),(12,4)$ with weights $\frac{792}{413}$, $\frac{816}{413}$, $\frac{840}{413}$, $\frac{864}{413}$, respectively.

If there are at most three type 4 items, then \emph{any} horizontal line intersects with at most three items, the empty space is at least $1/4-300\delta$, and the patterns are
$(12,3),(14,2),(16,1),(18,0)$ with weights $\frac{792}{413}$, $\frac{816}{413}$, $\frac{840}{413}$, $\frac{864}{413}$, respectively.
\end{proof}

\begin{lemma}\label{lem:packing_s3_s4_rect}
For $T_3$, the pattern $p=(0,0,4,6,0,\ldots, 0)$ maximizes $w(p) = \sum_{i=2}^k\lambda_i p_i$, given the $\lambda_i$-values from Table \ref{tab:dual_solution_rect}.
\end{lemma}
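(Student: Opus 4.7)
The plan is to exploit the dominance relations to reduce to patterns using only types $3$ and $4$, bound the number of type-$4$ items by two independent geometric arguments, and then verify that the resulting maximum is actually attained by a feasible pattern.

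First I would apply the dominances $s_4\succ s_5$, $(2\times 1)s_5\succ s_6$, $s_4\succ s_7$, $s_7\succ s_8$, and $(2\times 1)s_8\succ s_9$ stated just before the lemma, which together let us replace every item of types $5,\ldots,9$ in a pattern by one or two type-$4$ items without decreasing the weight or losing feasibility. It therefore suffices to show that for every feasible pattern $(a,b)$ with $a\ge 1$ items of type $3$ and $b$ items of type $4$, the weight $96a+72b$ (in units of $1/413$, matching the $\lambda$-values derived from Table~\ref{tab:dual_solution_rect}) is at most $4\cdot 96+6\cdot 72=816$.

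Next I would derive two bounds on $(a,b)$. A horizontal-line argument uses $w_3=1/2+200\delta>1/2$, so at any height $y$ we have $g(y)\le 1$ where $g(y)$ counts type-$3$ items crossing $y$; and $w_3+2w_4=1+140\delta>1$, so if $g(y)=1$ then at most one type-$4$ item crosses $y$. Writing $f(y)$ for the type-$4$ count, this gives $f(y)+3g(y)\le 4$, hence $b(1/3+\epsilon)\le 4-3a(1/6-2\epsilon)$, essentially $b\le 12-3a/2$. A vertical-column argument uses $3h_4=1+3\epsilon>1$, so at most two type-$4$ items stack in any column; integrating the column count gives $b(1/4-30\delta)\le 2$, which for sufficiently small $\delta$ forces $b\le 8$. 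Since $7h_3>1$, the case $a\ge 7$ is infeasible, so I would tabulate $b_{\max}$ for $a\in\{1,\ldots,6\}$, obtaining $b_{\max}=8,8,7,6,4,3$ with corresponding weights $672,768,792,\mathbf{816},768,792$; the maximum $816$ is attained only at $(a,b)=(4,6)$.

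Finally I would verify $(4,6)$ is realizable. This follows at once from $p^{(7)}=(0,0,4,2,4,0,\ldots,0)$, which is already given as a feasible pattern in Table~\ref{tab:patterns_rect}: a type-$5$ item has width $1/4+10\delta$ and height $1/3+\epsilon$, while a type-$4$ item has the same height and strictly smaller width $1/4-30\delta$, so replacing each of the four type-$5$ items in the packing of $p^{(7)}$ by a type-$4$ item preserves non-overlap and yields a packing of $(0,0,4,6,0,\ldots,0)$. The main obstacle is that the horizontal-line bound alone admits $(a,b)=(2,9)$ with weight $192+648=840>816$, so it is essential to combine it with the independent vertical-column bound $b\le 8$ to exclude this case.
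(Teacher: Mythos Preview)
Your argument is correct, and it is in fact cleaner and more systematic than the paper's own proof. The paper proceeds by a case distinction on the number of type-$4$ items (eight, seven, four to six, at most three), combining horizontal-line observations with empty-volume arguments to determine the dominant pattern in each regime; it then checks the weights case by case. You instead derive two uniform per-line inequalities---$f(y)+3g(y)\le 4$ horizontally and $f'(x)\le 2$ vertically---and integrate them to obtain the pair of linear constraints $b(1/3+\epsilon)+a(1/2-6\epsilon)\le 4$ and $b\le 8$, after which a short tabulation over $a\in\{1,\dots,6\}$ finishes the job. Your observation that the horizontal bound alone would admit the infeasible $(2,9)$ with weight $840$ pinpoints exactly why the second, vertical constraint is needed; the paper handles the same obstruction implicitly through its casework. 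Your feasibility argument for $(4,6)$, obtained by shrinking the four type-$5$ items in the already-exhibited pattern $p^{(7)}$ to type-$4$ items, is a nice shortcut that avoids drawing a new packing (the paper instead appeals to a separate figure).
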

\begin{proof}
In this proof, we again abbreviate patterns by listing only their third and fourth components.
First of all, see Figure \ref{fig:packing_s3_s4_rect} for a feasible packing of $p$. 
%The weight-per-area-ratio for $s_3$-items is higher than that for $s_4$-items.
%
%Two other promising patterns: 2 $s_3$ and 8 $s_4$; 6 $s_3$ and 3 $s_4$. Both have lower weight. \qed %TODO
Horizontal lines can only intersect with these sets of items:
\begin{itemize}
\item three or four type 4 items
\item at most \emph{two} items which have total width at most $3/4+170\delta$
\end{itemize}
As above, $(0,8)$ is a dominant pattern. There is at least a height of $1/3-2\varepsilon$ at which horizontal lines
intersect with at most two items. There is at most a height of $1/3-2\varepsilon$
at which horizontal lines intersect with at most one type 4 item.

Since two type 3 items cannot be placed next to
each other, 
%and no such item can be placed next to a type 4 item, 
this implies that $(2,8)$ is a (dominant) pattern, but with a smaller weight of $\frac{768}{413}$.

If there are seven type 4 items, again there is at most a height of $1/3-2\varepsilon$
at which horizontal lines intersect with at most one type 4 item, so $(3,7)$ is not a pattern.
If there are four to six type 4 items, there is an empty volume of at least
$(2/3-\varepsilon)(1/4-170\delta)$, so $(4,6)$ is dominant. Moreover, there is
at least a height of $2/3-\varepsilon$ at which horizontal lines
intersect with at most two items, so $(5,5)$ and $(5,4)$ are not patterns.

If there are three type 4 items, the empty volume is at least
$1/4-170\delta$, so $(6,3)$ is a dominant pattern with weight $\frac{792}{413}$. Finally, no bin can contain more than six type 3 items, so no other pattern can be heavier.
\end{proof}

\begin{figure}
\centering
\includegraphics[width=.25 \textwidth]{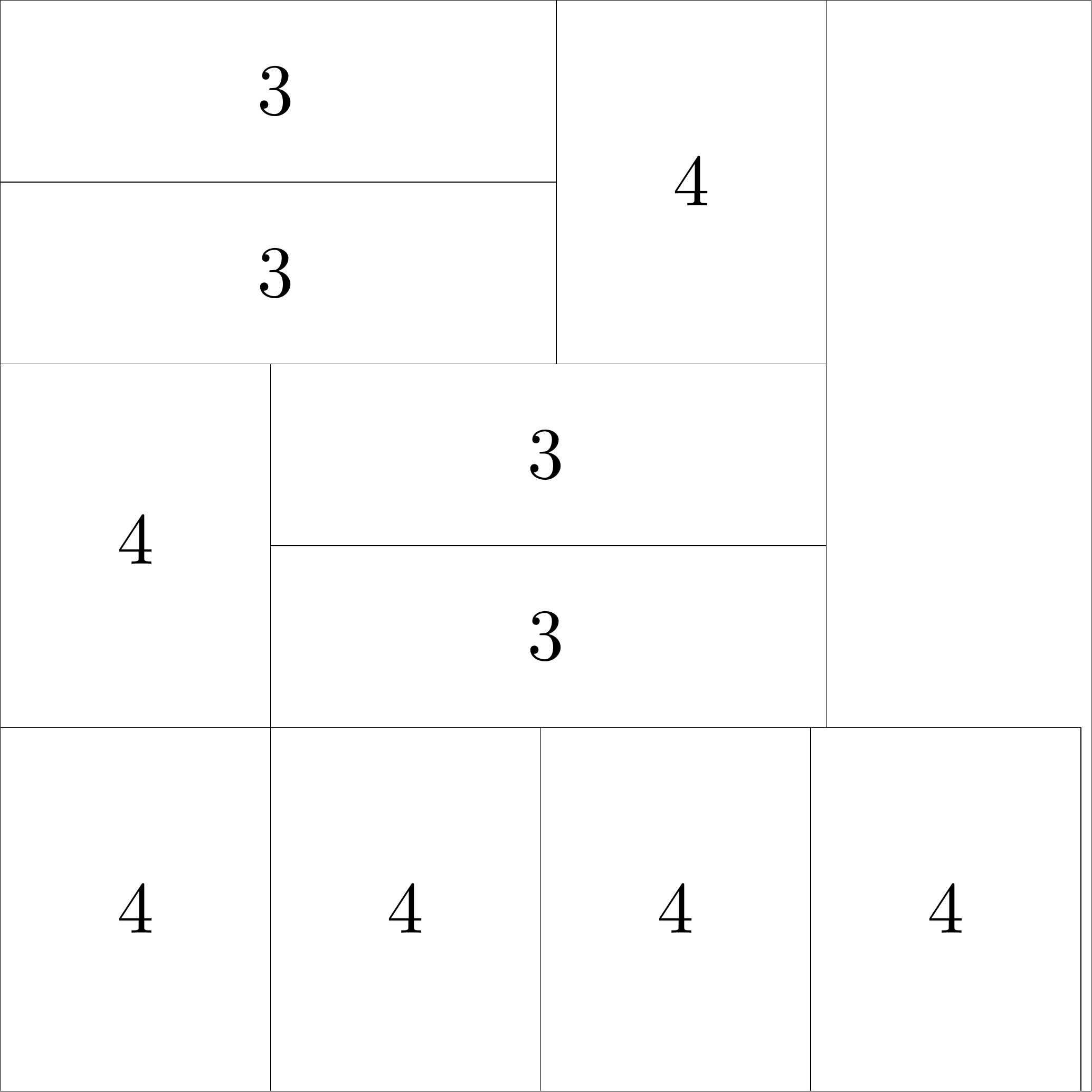}
\caption{A feasible packing for pattern $p$ from Lemma \ref{lem:packing_s3_s4_rect}}
\label{fig:packing_s3_s4_rect}
\end{figure}

\begin{lemma}
For $T_5$, the pattern $p=(0,0,0,0,3,0,4,0,0)$ maximizes $w(p) = \sum_{i=2}^k\lambda_i p_i$, given the $\lambda_i$-values from Table \ref{tab:dual_solution_rect}.
\end{lemma}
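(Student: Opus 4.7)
The plan is to first apply the dominance relations established earlier to reduce the problem to patterns containing only items of types $s_5$ and $s_7$. Specifically, $s_7 \succ s_8$ replaces every $s_8$ by an $s_7$; combined with $(2 \times 1) s_8 \succ s_9$ this replaces every $s_9$ by two $s_7$'s; and $(2 \times 1) s_5 \succ s_6$ replaces every $s_6$ by two $s_5$'s. Inspecting the capacities for $T_5$, $T_7$, and $T_8$ in Table~\ref{tab:patterns_for_dual_rect} one reads off $\lambda_5 = \lambda_7 = 72/413$, so the weight of any reduced pattern is $(a + b)\cdot 72/413$, where $a$ and $b$ count the $s_5$ and $s_7$ items. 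Hence the lemma reduces to showing $a + b \le 7$, with equality attained by $p$.

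Next, I would verify achievability by placing the four $s_7$ items side by side at the bottom of the bin (row width $1 - 12\delta$) and the three $s_5$ items in a row above them (row width $3/4 + 30\delta$), for a total stacked height $5/6 + 2\epsilon < 1$.

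For the upper bound, the key structural observation is that any two $s_7$ items have overlapping $y$-intervals (height $1/2 + \epsilon > 1/2$) and therefore must be $x$-disjoint. Hence all $s_7$ items sit side by side in a single $x$-strip of width $b(1/4 - 3\delta)$, forcing $b \le 4$. Analogously, $s_5$ items form at most two $y$-rows of at most three items each ($3(1/3+\epsilon) > 1$ and $4(1/4+10\delta) > 1$), giving $a \le 6$; this already settles the case $b = 0$. For $b \ge 1$, I would distinguish two kinds of $s_5$ rows: \emph{overlap rows}, whose $y$-interval meets the $s_7$-strip's $y$-range and which must lie in its $x$-complement of width $1 - b(1/4 - 3\delta)$, fitting $k_b$ items per row (one checks $k_1 = 2$, $k_2 = 1$, $k_3 = k_4 = 0$); and a \emph{complement row} lying entirely in the $y$-complement of the $s_7$-strip, which has height at most $1/2 - \epsilon$ and so admits at most one row of up to three $s_5$'s. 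Since a complement row shares $y$-range with any second overlap row, these two cannot coexist, giving $a \le k_b + 3$ in every case and hence $a + b \le k_b + b + 3$, which is at most $7$ and equals $7$ only when $b = 4$.

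The hard part will be the final case analysis for $b \in \{1, 2\}$: the $k_b$ values rely on the tight width inequalities $2(1/4 + 10\delta) > 1/2 + 6\delta$ and $3(1/4 + 10\delta) > 3/4 + 3\delta$, and the absence of a second $s_5$-row inside any $s_7$-strip's $y$-range uses $2(1/3 + \epsilon) > 1/2 + \epsilon$. One must also check that clever arrangements, where $s_7$ items are placed at different $y$-offsets or $s_5$ rows straddle the boundary between the $s_7$-strip and its $x$-complement, cannot squeeze in any extra item.
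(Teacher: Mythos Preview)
Your reduction to types $5$ and $7$ via the dominance relations, the observation that $\lambda_5=\lambda_7$, and the packing showing that $(a,b)=(3,4)$ is feasible are all correct and match the paper. The difference lies entirely in how you bound $a+b\le 7$.

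The paper's argument is far shorter than your case analysis. After noting $b\le 4$ and that with four $s_7$-items one can add at most three $s_5$-items, it handles $b\le 3$ in one stroke: when $b\le 3$, \emph{every} horizontal line crosses at most three items (four items on a line would force total width at least $1+\delta$, since any mix other than four $s_7$'s exceeds $1$). Integrating the count over $y\in[0,1]$ gives
\[
a\bigl(\tfrac13+\epsilon\bigr)+b\bigl(\tfrac12+\epsilon\bigr)\ \le\ 3,
\]
and this together with the separate bound $a\le 6$ yields $a+b\le 7$ for each $b\in\{0,1,2,3\}$. No row decomposition, no $k_b$ table, no configurations to rule out.

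Your geometric route is not wrong in spirit, but as written it has a real gap. You treat the $s_7$ items as forming a single ``$x$-strip'' with a single ``$y$-range'', and then assert that any $s_5$ in an overlap row must lie in the $x$-complement of width $1-b(1/4-3\delta)$. But the $s_7$ items are only pairwise $x$-disjoint; they need not be contiguous in $x$, and more importantly they can sit at different $y$-offsets. An $s_5$ item whose $y$-interval meets only \emph{some} of the $s_7$ items need only be $x$-disjoint from those, not from all $b$ of them, so your $k_b$ values are not justified as stated. Likewise, the claim that ``a complement row shares $y$-range with any second overlap row'' presupposes a single well-defined strip $y$-range and a particular stacking of rows; once the $s_7$'s are staggered in $y$, the overlap/complement dichotomy blurs and the coexistence argument needs a careful rewrite. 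All of this is probably repairable, but it is exactly the ``hard part'' you flag in your last paragraph, and the paper's horizontal-line-count plus height-integral trick sidesteps it entirely.
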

\begin{proof}
%$s_7$-items have the higher weight-per-area ratio, so packing the maximum number of $s_7$-items and then adding as many $s_5$-items as possibly is reasonable. For the total area that can be occupied by any pattern, note that the maximum width of items on any horizontal line crossing the bin is $1-12\delta$ (four $s_7$-items) followed by $3/4+30\delta$ (three $s_5$-items). The maximum height of items on any vertical line crossing the bin is $5/6+2\epsilon$ (one item of each type). 
%It is therefore easy to see that the given pattern $p$ achieves maximum weight (pack one row of four $s_7$-items at the bottom and add one row of three $s_5$-items on top of it). 
Note that $\lambda_5=\lambda_7$. Therefore the only question here is how many items
of these types can be packed together in a bin.
No more than four type 7 items can be packed in any bin, and at most three type 5
items can be packed with them (using similar arguments as above). Moreover, no more than six type 5 items can be packed in any bin, and if there are less than four type 7 items in any bin, then any horizontal line in such a bin intersects with at most three items. Thus at most seven items can be packed into any such bin.
\end{proof}

\begin{lemma}
For $T_6$, the patterns $p^{(1)}=(0,0,0,0,0,2,2,0,0)$ and $p^{(2)}=(0,0,0,0,0,1,4,0,0)$ maximize $w(p) = \sum_{i=2}^k\lambda_i p_i$, given the $\lambda_i$-values from Table \ref{tab:dual_solution_rect}.
\end{lemma}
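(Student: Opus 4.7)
The plan is to reduce to patterns containing only items of types $6$ and $7$, then carry out a short geometric case analysis. Using the dominance relations $s_7 \succ s_8$ and $(2\times 1)s_8 \succ s_9$ established earlier, any $s_8$- or $s_9$-item appearing in a pattern of $T_6$ can be replaced by $s_7$-items without decreasing the weight, so we may restrict attention to patterns $(p_6,p_7)$ with $p_6\ge 1$. Since the dual values give $\lambda_6 = 2\lambda_7$ (both sides of the identity $2\lambda_6+2\lambda_7 = \lambda_6+4\lambda_7$ claimed in the lemma), maximizing the weight is equivalent to maximizing $2p_6+p_7$.

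Two structural observations drive the rest of the argument. First, each $s_6$-item has width $w_6 = 1/2+20\delta > 1/2$, so the $x$-intervals of any two $s_6$-items overlap; consequently their $y$-intervals must be disjoint. As $h_6 = 1/3 + \epsilon$, three such items would have total $y$-extent exceeding $1$, giving $p_6\le 2$. Second, each $s_7$-item has $h_7 = 1/2+\epsilon > 1/2$, so every $s_7$-item must contain the horizontal line $y = 1/2$; in particular any two $s_7$-items have disjoint $x$-intervals, which immediately gives $p_7\le\lfloor 1/w_7\rfloor = 4$.

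The case $p_6 = 1$ is direct: placing the single $s_6$-item at the bottom leaves the line $y=1/2$ free and allows four $s_7$-items in a row above it, since $4w_7 = 1-12\delta<1$; thus $p^{(2)}$ is feasible and achieves $2p_6+p_7 = 6$. The harder case is $p_6 = 2$, where the two $s_6$-items occupy disjoint $y$-ranges but, because their $x$-intervals overlap, consume most of the width at their heights. If some $s_6$-item crosses $y=1/2$, then at that height the remaining width $1/2-20\delta$ is too small for two $s_7$-items (since $2w_7=1/2-6\delta$), forcing $p_7\le 1$ and $2p_6+p_7\le 5$.

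The main obstacle is the remaining subcase in which neither $s_6$ crosses $y=1/2$; then the two $s_6$-items lie entirely in $[0,1/2]$ and $[1/2,1]$ respectively. I would handle this by decomposing $[0,1]$ in the $x$-direction according to the intervals $L_x$ and $U_x$ of the lower and upper $s_6$-items. No $s_7$-item can fit vertically between them because the free gap has height at most $1-2h_6 = 1/3-2\epsilon < h_7$, so every $s_7$ must be $x$-disjoint from at least one of $L_x, U_x$. The two ``side regions'' $L_x\setminus U_x$ and $U_x\setminus L_x$ are single intervals of common width at most $1-w_6 = 1/2-20\delta$, which is strictly less than $2w_7 = 1/2-6\delta$; hence each side accommodates at most one $s_7$-item, giving $p_7\le 2$. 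Feasibility of $p^{(1)}$ is then witnessed by placing the two $s_6$-items at opposite corners (bottom-left and top-right) and two $s_7$-items in the opposite side strips. Combining the cases yields $2p_6+p_7\le 6$, with equality precisely at $(1,4)$ and $(2,2)$; multiplying by $\lambda_7 = 72/413$ recovers the weight $432/413$ listed in Table \ref{tab:patterns_for_dual_rect}.
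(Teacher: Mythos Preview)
Your proof is correct and follows the same overall approach as the paper's: reduce to types $6$ and $7$ via dominance, bound $p_6\le 2$ and $p_7\le 4$, then case-split on $p_6$. The paper's version is far terser---it simply asserts that with two $s_6$-items ``it is easy to see that no more than two $s_7$-items can be added'' and that with four $s_7$-items ``no more than one $s_6$-item can be added''---whereas you actually supply the geometric arguments behind those claims.

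One small notational slip in your $p_6=2$ subcase: the ``side regions'' should be $[0,1]\setminus L_x$ and $[0,1]\setminus U_x$, not $L_x\setminus U_x$ and $U_x\setminus L_x$ (an $s_7$ whose $y$-range meets the lower $s_6$ must have its $x$-interval in the complement of $L_x$, not in a piece of $L_x$). These complements need not be single intervals, but that does not matter: since all $s_7$-items are pairwise $x$-disjoint, the total $x$-length of those avoiding $L_x$ is at most $|[0,1]\setminus L_x|=1-w_6=1/2-20\delta<2w_7$, so at most one $s_7$ avoids $L_x$, and symmetrically at most one avoids $U_x$. Your inequality $1-w_6<2w_7$ is exactly the right one and your conclusion $p_7\le 2$ stands.
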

\begin{proof}
For a packing of $p^{(1)}$, see Figure \ref{subfig:patterns_rect_14}. Observe that a bin can never contain more than two $s_6$-items, and together with the fact that it is easy to see that no more than two $s_7$-items can be added to them, it follows that $p^{(1)}$ is a candidate for the heaviest pattern. Likewise, it is clear that no bin can contain more than four $s_7$-items, and it is easy to see that no more than one $s_6$-item can be added to those.
\end{proof}

\begin{table}
\centering
\begin{tabular}{|c|c|c|c|c|c|c|c|c|c|}
\hline 
$j$ & 1 & 2 & 3 & 4 & 5 & 6 & 7 & 8 & 9 \\ 
\hline 
$\lambda_j \cdot 413$ & 48 & 48 & 96 & 72 & 72 & 144 & 72 & 72 & 144 \\ 
\hline 
$-\mu_j\cdot 413$ & 288 & 48 & 240 & 72 & 72 & 144 & 72 & 72 & 144 \\ 
\hline 
\end{tabular} 
\caption{Dual solution for the 1.859 lower bound.}
\label{tab:dual_solution_rect}
\end{table}

\section{Lower Bound for Harmonic-Type Algorithms}
\label{sect:harmonic}

Now, we consider the hypercube packing problem in $d$ dimensions, for any $d \ge 2$.
We define the class \class{} of Harmonic-type algorithms analogous to \cite{RBLL89}. 
An algorithm $\mathcal{A}$ in \class{} for any $h \ge 1$ distinguishes, possibly among others, the following disjoint subintervals
\begin{itemize}
\item $\overline{I}_1 = \left( 1-y_1,1 \right]$
\item $I_{1,j} = \left( 1-y_{j+1}, 1-y_j \right]$, for every $j \in \{1, \ldots, h\}$
\item $\overline{I}_2 = \left( y_h, 1/2 \right]$
\item $I_{2,j} = \left( y_{h-j}, y_{h-j+1} \right]$, for every $j \in \{1, \ldots, h\}$
\item $I_\lambda = \left( 0, \lambda \right]$
\end{itemize}
for some parameters $y_j$ and $\lambda$, where $1/3 = y_0 < y_1 < \ldots < y_h < y_{h+1} = 1/2$ and $0 < \lambda \le 1/3$. 
For convenience, we assume that all $y_j$ are rational.

Algorithm $\mathcal{A}$ has to follow the following rules:
\begin{enumerate}
\item For each $j \in \{1, \ldots, h\}$, there is a constant $m_j$ s.t. a $1/m_j$-fraction of the items of side length in $I_{2,j}$ is packed $2^d-1$ per bin (``red items''), the rest are packed $2^d$ per bin (``blue items'').
\item No bin contains an item of side length in $I_{1,i}$ and an item of side length in $I_{2,j}$ if $i+j \le h$.
\item No bin contains an item of side length in $\overline{I}_1$ \textbf{and} an item of side length in $I_{2,j}$.
\item No bin contains an item of side length in $I_{1,j}$ \textbf{and} an item of side length in $\overline{I}_2$.
\item No bin that contains an item of side length in $I_\lambda$ contains an item of side length in $I_{1,j}, I_{2,j}, \overline{I}_1$ or $\overline{I}_2$.
\end{enumerate}

We will now define $2h+1$ input instances for the hypercube packing problem in $d$ dimensions, and for each instance we derive a lower bound on the number of bins any \class{}-algorithm must use to pack this input.

Every such input instance consists of three types of items. 
The input will contain $N$ items of side length $u$, followed by $(2^d-1)N$ items of side length $v$ and finally followed by $MN$ items of side length $t$, where $u,v,t$ and $M$ will be defined for each instance differently. 
We will then show, for every instance, that one $u$-item, $2^d-1$ $v$-items and $M$ $t$-items can be packed together in one bin, thus the optimal packing for this input uses at most $N$ bins.

\subsection{Instances $1, \ldots, h$}

Let $\epsilon>0$ be arbitrarily small. For every $j \in \{1, \ldots, h\}$, we define the following instance of the problem: 
Let $u=\frac{1+\epsilon}{2}, v=(1+\epsilon)y_{h-j}$ and $t=\frac{(1+\epsilon)y_{h-j}}{2K}$ for some large integer $K$ such that $t \in I_\lambda$ and $\frac{K}{y_{h-j}} \in \mathbb{N}$. 
Clearly, $u \in I_{1,h}$ and $v \in I_{2,j}$.

In order to show that one $u$-item, $2^d-1$ $v$-items and $M$ $t$-items can be packed in one bin, we will define anchor points for each size and then place items at some of these such that no two items are overlapping.

There is only one anchor point for $u$-items, namely $(0, \ldots, 0)$, i.e. the origin of the bin. 
We place one $u$ item there.
For items of side length $v$, we define anchor points as all points having all coordinates equal to $(1+\epsilon)/2$ or $(1+\epsilon)/2-(1+\epsilon)y_{h-j}$. 
This defines $2^d$ anchor points, but an anchor point can only be used for a $v$-item if at least one coordinate is $(1+\epsilon)/2$. 
Hence, we can pack $2^d-1$ $v$-items together with the $u$-item placed before.

For items of side length $t$, the anchor points are all points with coordinates equal to $i\frac{(1+\epsilon)y_{h-j}}{2K}$ for $i=0, \ldots, \frac{2K}{y_{h-j}}-2$, i.e. we have $(\frac{2K}{y_{h-j}}-1)^d$ anchor points for these items. 
These anchor points form a superset of all previous anchor points for $u$- and $v$-items. 
Together with the fact that $t$ divides $u$ and $v$, we can conclude that all larger items take away an integer amount of anchor points for the $t$-items. 
To be precise, the $u$-item blocks $\left(u/t\right)^d=\left(K/y_{h-j}\right)^d$ anchor points for $t$-items and each $v$-item blocks $\left(\frac{v}{t}\right)^d=(2K)^d$ anchor points for $t$-items. 
Hence, we can add $M := \left(\frac{2K-y_{h-j}}{y_{h-j}}\right)^d - \left(\frac{K}{y_{h-j}}\right)^d-(2^d-1)(2K)^d$ $t$-items to the items packed before.

A Harmonic-type algorithm $\mathcal{A}$ packs a $1/m_j$-fraction of the $N(2^d-1)$ $v$-items $2^d-1$ per bin, using $\frac{(2^d-1)N/m_j }{2^d-1} = \frac{N}{m_j}$ bins in total. 
The remaining $N(2^d-1)(1-1/m_j)$ $v$-items are packed $2^d$ per bin, adding another $ N(1-1/m_j)\frac{2^d-1}{2^d}= N(1-1/m_j)\left(1-\frac{1}{2^d}\right)$ bins.

$N/m_j$ of the $u$-items are added to bins with red $v$-items, the remaining $N(1-1/m_j)$ items of side length $u$ must be packed one per bin.

Finally, an algorithm in the class \class{} needs at least $NM/\left( \frac{2K-y_{h-j}}{y_{h-j}} \right)^d$ bins to pack the $t$-items, giving $$N\left(1 - \left(\frac{K}{2K-y_{h-j}}\right)^d-(2^d-1)\left(\frac{2Ky_{h-j}}{2K-y_{h-j}}\right)^d\right)$$ bins for these items. 
If we let $K \rightarrow \infty$, this tends to $N\left( 1-1/2^d-(2^d-1)y_{h-j}^d \right)$.

So, the total number of bins needed is at least 
\begin{align*}
& N\left( \frac{1}{m_j} + \left(1-\frac{1}{m_j}\right)\left(1-\frac{1}{2^d}\right) + 1-\frac{1}{m_j} + 1 - \frac{1}{2^d}-(2^d-1)y_{h-j}^d \right)\\
& = N\left( 2 + \left(1-\frac{1}{m_j}\right)\left(1-\frac{1}{2^d}\right) - \frac{1}{2^d}-(2^d-1)y_{h-j}^d \right)
\end{align*}

As the optimal solution uses at most $N$ bins, the performance ratio of any such algorithm $\mathcal{A}$ must be at least 
\begin{equation}
R_\mathcal{A} \ge 2 + (1-1/m_j)(1-1/2^d) - 1/2^d-(2^d-1) y_{h-j}^d \quad\quad\quad j=1,\ldots,h \label{eq:ineq1}
\end{equation}

\subsection{Instances $h+1, \ldots, 2h$}
\label{subsec:harmonic2}

Another set of instances is given for any $j\in \{1, \ldots, h\}$, if we use $u=(1+\epsilon)(1-y_{h-j+1}), v=(1+\epsilon)y_{h-j}$ and $t=\frac{(1+\epsilon)y_{h-j}(1-y_{h-j+1})}{K}$ for some large enough integer $K$ such that $u \in I_{1,h-j}, v \in I_{2,j}, t \in I_\lambda$ and $\frac{K}{y_{h-j}}, \frac{K}{1-y_{h-j+1}} \in \mathbb{N}$. 
For these item sizes, the algorithm is not allowed to combine $u$-items with $v$-items in the same bin, although space for items in $I_{1,i}$ with $i>h-j$ is reserved in red bins containing $v$-items.
We define the following anchor points: the point $(0,0)$ for type $u$; all points with all coordinates equal to $(1+\epsilon)(1-y_{h-j+1})$ or $(1+\epsilon)(1-y_{h-j+1})-(1+\epsilon)y_{h-j}$ for type $v$; and all points with all coordinates equal to $i\frac{(1+\epsilon)y_{h-j}(1-y_{h-j+1})}{K}$ for some $i \in \{0, \ldots, \frac{K}{y_{h-j}(1-y_{h-j+1})}-2\}$ for type $t$.
%\begin{itemize}
%\item $(0,0)$ for type $u$
%\item All points with all coordinates equal to $(1+\epsilon)(1-y_{h-j+1})$ or $(1+\epsilon)(1-y_{h-j+1})-(1+\epsilon)y_{h-j}$ for type $v$
%\item All points with all coordinates equal to $i\frac{(1+\epsilon)y_{h-j}(1-y_{h-j+1})}{K}$ for some $i \in \{0, \ldots, \frac{K}{y_{h-j}(1-y_{h-j+1})}-2\}$ for type $t$
%\end{itemize}
Again the anchor points for $u$- and $v$-items are a subset of the anchor points for $t$-items, and hence with the same argumentation as before we can pack one $u$-item together with $2^d-1$ $v$-items and $M$ $t$-items if we choose $M=\left( \frac{K-y_{h-j}(1-y_{h-j+1})}{y_{h-j}(1-y_{h-j+1})} \right)^d - \left(\frac{K}{y_{h-j}}\right)^d - (2^d-1)\left( \frac{K}{1-y_{h-j+1}} \right)^d$, as the $u$-item takes up $\left(\frac{K}{y_{h-j}}\right)^d$ anchor points of the $t$-items and each $v$-item takes up $\left( \frac{K}{1-y_{h-j+1}} \right)^d$ of these anchor points.

A similar calculation to before can be done:
An algorithm in class \class{} needs $N/m_j + N(1-1/m_j)(1-1/2^d)$ bins for red and blue items of type $v$. 
It needs $N$ bins for $u$-items, as they are packed one per bin, and finally
\begin{align*} & \frac{NM}{\left(\frac{K-y_{h-j}(1-y_{h-j+1})}{y_{h-j}(1-y_{h-j+1})}\right)^d}\\
& = N \left( 1 - \left( \frac{K(1-y_{h-j+1})}{K-y_{h-j}(1-y_{h-j+1})} \right)^d - (2^d-1)\left( \frac{Ky_{h-j}}{K-y_{h-j}(1-y_{h-j+1})} \right)^d \right)\\
& \xrightarrow{K\rightarrow \infty} N\left( 1 - \left(1-y_{h-j+1}\right)^d - (2^d-1)y_{h-j}^d \right)
\end{align*}
bins are required to pack the $t$-items. 
Hence, we need at least
\begin{align*}
& N\left( \frac{1}{m_j} + \left(1-\frac{1}{m_j}\right)\left(1-\frac{1}{2^d}\right) + 1 + 1 - \left(1-y_{h-j+1}\right)^d - (2^d-1)y_{h-j}^d \right)\\
& = N\left( 2 + \frac{1}{m_j} + \left(1-\frac{1}{m_j}\right)\left(1-\frac{1}{2^d}\right) - \left(1-y_{h-j+1}\right)^d - (2^d-1)y_{h-j}^d \right)
\end{align*}
bins in total. This gives the following lower bound for the performance ratio:
\begin{eqnarray}
R_\mathcal{A} \ge 2 + 1/m_j + \left(1-1/m_j\right)\left(1-1/2^d\right) - \left(1-y_{h-j+1}\right)^d - (2^d-1)y_{h-j}^d \label{eq:ineq2}\\
\hfill j=1,\ldots,h \nonumber
\end{eqnarray}

\subsection{Instance $2h+1$}
\label{subsec:harmonic3}

Let $u=\frac{1+\epsilon}{2}, v=(1+\epsilon)y_h$ and $t=\frac{(1+\epsilon)y_h}{2K}$ for some large enough integer $K$ such that $u \in I_{1,h}, v \in \overline{I}_{2}, t \in I_\lambda$ and $\frac{K}{y_h} \in \mathbb{N}$. 
For these item sizes, the algorithm is not allowed to combine $u$-items with $v$-items in the same bin.
We define anchor points as follows: $(0,0)$ for type $u$; all points with coordinates equal to $\frac{1+\epsilon}{2}$ or $\frac{1+\epsilon}{2}-(1+\epsilon)y_h$ for type $v$; all points with coordinates equal to $i\frac{(1+\epsilon)y_h}{2K}$ for type $t$.
%\begin{itemize}
%\item $(0,0)$ for type $u$
%\item All points with coordinates equal to $\frac{1+\epsilon}{2}$ or $\frac{1+\epsilon}{2}-(1+\epsilon)y_h$ for type $v$
%\item All points with coordinates equal to $i\frac{(1+\epsilon)y_h}{2K}$ for type $t$
%\end{itemize}
As before, the anchor points for $u$ and $v$-items are a subset of the $t$-items' anchor points, and so we can pack one $u$-item together with $2^d-1$ $v$-items and $M$ $t$-items if we choose $M=\left( \frac{2K-y_{h}}{y_{h}} \right)^d - \left(\frac{K}{y_{h}}\right)^d - (2^d-1)\left( 2K \right)^d$.

For this input, any Harmonic-type algorithm uses at least $N$ bins for $u$-items, $N\frac{2^d-1}{2^d}=N(1-\frac{1}{2^d})$ bins for $v$-items and $\frac{NM}{\left(\frac{2K-y_h}{y_h}\right)^d}$ bins for $t$-items. 
This gives in total
\begin{align*}
& N\left( 2 - \frac{1}{2^d} + 1 - \left(\frac{K}{2K-y_h}\right)^d - (2^d-1)\left(\frac{2Ky_h}{2K-y_h}\right)^d \right)\\
& \xrightarrow{K\rightarrow \infty} N\left( 3 - \frac{1}{2^{d-1}} - (2^d-1)y_h^d \right)
\end{align*}
bins. We therefore can derive the following lower bound on the performance ratio:
\begin{equation}
R_\mathcal{A} \ge 3 - 1/2^{d-1} - (2^d-1)y_h^d \label{eq:ineq3}
\end{equation}

\subsection{Combined Lower Bound}

Given a certain set of parameters ($y_j$ and $m_j$), the maximum of the three right sides of inequalities (\ref{eq:ineq1}), (\ref{eq:ineq2}) and (\ref{eq:ineq3}) give us a bound on the competitive ratio of any Harmonic-type algorithm with this set of parameters. In order to get a general (worst-case) lower bound on $R_\mathcal{A}$, we need to find the minimum of this maximum over all possible sets of parameters.

This lower bound for $R_\mathcal{A}$ is obtained when equality holds in all of the inequalities (\ref{eq:ineq1}), (\ref{eq:ineq2}) and (\ref{eq:ineq3}). To see this, consider the following: We have $2h+1$ variables and $2h+1$ constraints. For $j \in \{1,\ldots, h\}$, we see that (\ref{eq:ineq1}) is increasing in $m_j$ and (\ref{eq:ineq2}) is decreasing in $m_j$. Next, let $c \in \{1,\ldots, h-1\}$. We see that (\ref{eq:ineq1}) for $j=h-c \in \{1,\ldots,h-1\}$ is decreasing in $y_c$, and (\ref{eq:ineq2}) for $j=h-c+1\in\{2,\ldots,h\}$ is increasing in $y_c$. Finally, we have that (\ref{eq:ineq2}) for $j=1$ is increasing in $y_h$ and (\ref{eq:ineq3}) is decreasing in $y_h$. This means, given certain parameters $y_j$ and $m_j$, if e.g. (\ref{eq:ineq3}) gives a smaller lower bound on $R_\mathcal{A}$ than (\ref{eq:ineq2}) with $j=1$ does, we can decrease the value of $y_h$ such that the maximum of the three lower bounds becomes smaller.

Setting the right hand side of (\ref{eq:ineq1}) equal to the right hand side of (\ref{eq:ineq2}), gives us $\frac{1}{m_j}=(1-y_{h-j+1})^d-\frac{1}{2^d}$ or alternatively $\frac{1}{m_{h-j+1}}=(1-y_j)^d-\frac{1}{2^d}$.
Plugging this into (\ref{eq:ineq1}) (replacing $j$ by $h-j+1$), we find that
\begin{align}
y_j = 1-\left( \frac{-2^d R_\mathcal{A} + 2^dy_{j-1}^d-4^dy_{j-1}^d-1+3\cdot 2^d-1/2^d}{2^d-1} \right)^{1/d} \label{eq:recy}
\end{align}
Recall that we require $1/3 = y_0 < y_1$. 
From this, combined with  (\ref{eq:recy}) for $j=1$, we obtain that
\begin{align*}
R_\mathcal{A} \ge 3 - 2\frac{2^d-1}{3^d}-\frac{2^d+1}{4^d}
\end{align*}
We list some values of the lower bound for several values of $d$ in Table \ref{tab:lower_bounds}.
\begin{table}[h]
\def\arraystretch{1.3}
	\setlength\tabcolsep{5px}
	\centering
	\begin{tabular}{|c|c|c|c|c|c|c|c|}
	\hline $d=$ & 1 & 2 & 3 & 4 & 5 & 6 & $\infty$ \\ 
	\hline $R_\mathcal{A}>$ & 1.58333 & 2.02083 & 2.34085 & 2.56322 & 2.71262 & 2.81129 & 3 \\ 
	\hline 
	\end{tabular} 
\caption{Lower bounds for Harmonic-type algorithms in dimensions 1 to 6 and limit for $d \rightarrow \infty$.}
\label{tab:lower_bounds}
\end{table}

Note that for $d=1$, our formula yields the bound of Ramanan et al. \cite{RBLL89}.
Surprisingly, it does not seem to help to analyze the values of $y_2, \ldots, y_h$. 
Especially, equations involving $y_j$ for $j > 1$ become quite messy due to the recursive nature of (\ref{eq:recy}). 
If $h$ is a very small constant like 1 or 2, we can derive better lower bounds for $R_\mathcal{A}$. 
For larger $h$, we can use the inequalities $y_1 < y_h, y_2 < y_h, y_3 < y_h$ (i.e. assuming that $h>3$) to derive \emph{upper} bounds on the best value $R_\mathcal{A}$ that could possibly be proven using this technique. 
These upper bounds are very close to 2.02 and suggest that for larger $h$, an algorithm in the class \class{} could come very close to achieving a ratio of 2.02 for these inputs. 
However, since the inequalities become very unwieldy, we do not prove this formally.

\begin{theorem}
No Harmonic-type algorithm for two-dimensional online hypercube packing can achieve an asymptotic performance ratio better than $2.0208$.
\end{theorem}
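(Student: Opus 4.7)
The plan is to specialize the framework of Sections~4.1--4.3 to $d=2$ and optimize over the parameters $y_1,\ldots,y_h$ and $m_1,\ldots,m_h$ that characterize an arbitrary \class{} algorithm. The three families of instances produce $2h+1$ lower bounds on $R_\mathcal{A}$: inequality (\ref{eq:ineq1}) for $j=1,\ldots,h$, inequality (\ref{eq:ineq2}) for $j=1,\ldots,h$, and inequality (\ref{eq:ineq3}). The best achievable ratio over the class is the minimum, over all parameter settings, of the maximum of these $2h+1$ right-hand sides.

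First I would use the monotonicity structure already noted in the Combined Lower Bound subsection: each $m_j$ enters (\ref{eq:ineq1}) and (\ref{eq:ineq2}) with opposite monotonicity at index $j$; each $y_c$ with $1\leq c \leq h-1$ couples (\ref{eq:ineq1}) at $j=h-c$ oppositely to (\ref{eq:ineq2}) at $j=h-c+1$; and $y_h$ couples (\ref{eq:ineq2}) at $j=1$ oppositely to (\ref{eq:ineq3}). From this I would conclude that at any parameter setting that minimizes the maximum, all $2h+1$ of these inequalities must be tight, since otherwise a small perturbation in the corresponding $m_j$ or $y_c$ would strictly decrease the maximum.

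With all inequalities tight, equating the right-hand sides of (\ref{eq:ineq1}) and (\ref{eq:ineq2}) yields $1/m_{h-j+1} = (1-y_j)^d - 1/2^d$; substituting this back into (\ref{eq:ineq1}) produces the recurrence (\ref{eq:recy}) that expresses $y_j$ in terms of $y_{j-1}$ and $R_\mathcal{A}$. Iterating once from the boundary $y_0=1/3$ and using the constraint $y_1>y_0$ reproduces the closed-form bound already obtained in the excerpt,
\begin{equation*}
R_\mathcal{A} \;\ge\; 3 \;-\; 2\cdot\frac{2^d-1}{3^d} \;-\; \frac{2^d+1}{4^d}.
\end{equation*}
Setting $d=2$ gives $R_\mathcal{A} \ge 3 - \tfrac{2}{3} - \tfrac{5}{16} = \tfrac{97}{48} = 2.02083\ldots > 2.0208$. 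Since this bound is independent of $h$, it applies uniformly to every algorithm in \class{} for every $h\ge 1$, which is the statement of the theorem.

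The main obstacle is the tightness/monotonicity step: with $2h+1$ coupled variables one must exclude the possibility that the minimax is attained on the boundary of the feasible region (for instance at $y_j = y_{j+1}$, at $y_1 = 1/3$, or with some $m_j\to\infty$) rather than in the interior where all $2h+1$ equalities hold simultaneously. For $d=2$ one can check directly from (\ref{eq:recy}) that the resulting sequence $(y_j)$ is strictly increasing and stays in $(1/3,1/2)$, so the interior configuration is feasible and binding, and the bound $97/48$ follows from the arithmetic above.
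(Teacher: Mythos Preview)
Your proposal is correct and follows the paper's own argument essentially step for step: you invoke the three instance families of Sections~4.1--4.3, use the same pairwise monotonicity observations to force all $2h+1$ right-hand sides to coincide at the minimax, recover $1/m_{h-j+1}=(1-y_j)^d-1/2^d$ and the recurrence~(\ref{eq:recy}), and then read off the closed-form bound from $y_1>y_0=1/3$, specializing to $d=2$ to obtain $97/48>2.0208$. The only addition is your explicit remark that one should check the interior configuration is feasible; the paper handles this at the same informal level, and in fact the final inequality already follows from just the $j=h$ cases of (\ref{eq:ineq1}) and (\ref{eq:ineq2}) together with $y_1>1/3$, so the feasibility of the full sequence $(y_j)$ is not needed to secure the bound.
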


\section{Further Lower Bounds}

Inspired by \cite{HS15}, one could try to improve online algorithms for packing 2-dimensional squares by incorporating two ideas from the one-dimensional case: combining large items (i.e. items larger than $1/2$) and medium items (i.e. items with size in $(1/3,1/2]$) whenever they fit together (ignoring their type), and postponing the coloring decision. The former is intuitive, while the idea of the latter would be the following: When items of a certain type arrive, we first give them provisional colors and pack them into separate bins (i.e. one item per bin). After several items of this type arrived, we choose the smallest of them to be red and all others are colored blue. With following items of this type, we fill up the bins with additional items. However, simply adding two more red items to the bin with a single red item might be problematic: When filling up the red bins with two more red items, it could happen that these later red items are larger than the first one - negating the advantage of having the first red item be relatively small. Alternatively, we could leave the red item alone in its bin. This way, we make sure that at most $3/4$ of the blue items of a certain medium type are smaller than the smallest red item of this type, but we have more wasted space in this bin.

For both approaches discussed above we will show lower bounds on the competitive ratio that are only slightly lower or even higher than the lower bound established in Section \ref{sect:harmonic} for Harmonic-type algorithms.

\subsection{Always combining large and medium items}
\label{subsec:lb2}

First, we consider algorithms that combine small and large items whenever they fit together. We define a class of algorithms $B_1$ that distinguish, possibly among others, the following disjoint subintervals (types): 

\begin{itemize}
\item $I_m = (1/3, y]$ for some $y \in (1/3,1/2]$
\item $I_\lambda = (0,\lambda]$
\end{itemize}

These algorithms satisfy the following rules:

\begin{enumerate}
\item There is a parameter $\alpha$ s.t. an $\alpha$-fraction of the items of side length in $I_m$ are packed 3 per bin (``red items''), the rest are packed 4 per bin (``blue items'').
\item No bin that contains an item of side length in $I_\lambda$ contains an item of side length larger than $1/2$ or an item of side length in $I_m$.
\item Items of type $I_m$ are packed without regard to their size.
\end{enumerate}

Let $a,b\in I_m, a<b$. We consider two different inputs, both starting with the same set of items: $\frac{\alpha}{3}N$ items of size $b$ and $(1-\alpha/3)N$ items of size $a$ (i.e. in total $N$ items of size $a$ and $b$). By rule 3, the adversary knows beforehand which item will be packed in which bin, as they belong to the same type. Hence, the adversary can order these items in such a way that the items colored blue by the algorithm are all $a$-items, and in each bin with red items, there are two $a$- and one $b$-item. By rule 1, the online algorithm uses $(\frac{\alpha}{3}+\frac{1-\alpha}{4})N = \frac{3+\alpha}{12}N$ bins for items of this type.

The sizes $a$ and $b$ will tend towards $1/3$, as this way the adversary can maximize the total volume of sand (infinitesimally small items) that can be added to any bin in the optimal solution while not changing the way the algorithm packs these items and increasing the number of bins the algorithm needs for packing the sand items. Therefore, we will assume that $a$ and $b$ are arbitrarily close to $1/3$.

In the first input, after these medium items, $\frac{(1-\alpha/3)N}{3}$ items of size $1-a$ arrive, followed by sand of total volume $\frac{24+7\alpha}{324}N$. In the optimal solution, we can pack $\frac{\alpha}{12}N$ bins with four $b$-items and sand of volume $5/9$ each, and $\frac{(1-\alpha/3)N}{3}$ bins with three $a$-items, one $(1-a)$-item and sand of volume $\frac{\alpha}{12}N\cdot \frac{2}{9}$ each. Hence, the optimal solution uses $\frac{\alpha}{12}N + \frac{(1-\alpha/3)N}{3} = \frac{12-\alpha}{36}N$ bins.

The algorithm, however, cannot pack a large item into any of the bins with red medium items, as these always contain a $b$-item. Hence, in addition to the $\frac{3+\alpha}{12}N$ bins for medium items, the algorithm needs $\frac{(1-\alpha/3)N}{3}$ bins for large items and at least $\frac{24+7\alpha}{324}N$ bins for sand. This gives in total at least $\frac{213-2\alpha}{324}N$ bins, and a competitive ratio of at least
\begin{equation}
\frac{\frac{213-2\alpha}{324}N}{\frac{12-\alpha}{36}N}=\frac{213-2\alpha}{9(12-\alpha)}\label{eq:ineq4}
\end{equation}

In the second input, after the medium items, $N/3$ items of size $1/2+\epsilon$ will arrive, followed by sand of total volume $\frac{5}{36}N$. The optimal solution packs all medium items three per bin, using $N/3$ bins, and adds one large item and sand of volume $15/36$ in each such bin. In the algorithm's solution, large items can only be added to the $\alpha N/3$ bins containing three red items, i.e. it needs additional $N/3-\alpha N/3$ bins for the remaining $N/3-\alpha N/3$ large items. Finally, the algorithm uses at least $5/36 N$ bins for sand. The algorithm therefore uses in total at least $\frac{3+\alpha}{12}N + (1-\alpha)N/3 + 5/36 N = \frac{26-9\alpha}{36}N$ bins. This gives a competitive ratio of at least 
\begin{equation}
\frac{3(26-9\alpha)}{36} = \frac{26-9\alpha}{12}\label{eq:ineq5}
\end{equation}

Observe that (\ref{eq:ineq4}) is increasing in $\alpha$, while (\ref{eq:ineq5}) is decreasing in $\alpha$. Hence, the minimum over the maximum of the two bounds is obtained for the $\alpha$-value that makes both bounds equal, which is $\alpha=\frac{197-\sqrt{36541}}{27}\approx 0.2164$. For this $\alpha$, both bounds become larger than $2.0043$.

\begin{theorem}
No algorithm in class $B_1$ for two-dimensional online hypercube packing can achieve a competitive ratio of less than $2.0043$.
\end{theorem}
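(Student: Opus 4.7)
My plan is to exhibit two adversarial input sequences parameterized by the algorithm's red/blue ratio $\alpha$, derive a lower bound on the competitive ratio from each, and then optimize the adversary's choice by balancing the two bounds to find the value of $\alpha$ that minimizes the worst case. The key leverage comes from rule 3 of class $B_1$: since items in $I_m$ are packed without regard to their exact size, the adversary can reveal the medium items in an order chosen \emph{after} seeing which bins will become red, even though the algorithm is online. This is the main structural advantage I intend to exploit.

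First I would set $a, b \in I_m$ with $a < b$ both approaching $1/3$ from above (so that the residual space $1 - 3a$ in a red bin is maximized, letting the adversary cram in as much sand as possible in the optimum), and feed the algorithm $\alpha N / 3$ items of size $b$ followed by $(1 - \alpha/3) N$ items of size $a$. By rule 3 the algorithm commits bins to the ``red'' role based only on type, so the adversary can match the $b$-items to red bins (two $a$'s plus one $b$ per red bin) and have all blue bins contain only $a$'s. This gives the algorithm $\frac{3+\alpha}{12} N$ bins just for medium items. In the first continuation I would append $(1-\alpha/3) N / 3$ items of size $1-a$ and sand of total volume $\frac{24+7\alpha}{324} N$; the crucial point is that no $(1-a)$-item fits in a red bin (which contains a $b$-item), so the algorithm pays for these large items separately, yielding inequality (\ref{eq:ineq4}). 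In the second continuation I would append $N/3$ items of size slightly above $1/2$ and sand of volume $\frac{5}{36} N$; now large items only fit in the red bins of the algorithm, forcing extra bins for the remainder, which yields inequality (\ref{eq:ineq5}).

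In both cases the optimum is easy to construct: pack three medium items with one large item per bin (fitting because $3a \le 1$ and $1 - a \le 2/3$), and distribute the sand into the remaining volume. The main obstacle is the volume accounting for the sand -- I need to verify that in the optimum the free volume in each bin is at least the sand volume per bin used, while in the algorithm's solution the wasted space in the medium and large-item bins is too small to absorb the sand, so the algorithm truly needs $\lceil V_{\text{sand}} / 1 \rceil$ extra bins asymptotically. Once both bounds are established as functions of $\alpha$, the last step is purely calculus: (\ref{eq:ineq4}) is increasing in $\alpha$ and (\ref{eq:ineq5}) is decreasing in $\alpha$, so the adversary's worst case for the algorithm occurs when the two are equal, giving $\alpha = (197 - \sqrt{36541})/27$ and a common value exceeding $2.0043$.

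Finally, I would note that the argument only uses rules 1--3 of class $B_1$ and does not depend on how $I_\lambda$ or any additional subintervals are further subdivided by the algorithm, so the bound applies uniformly across the class; I would state the theorem and conclude.
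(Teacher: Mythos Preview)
Your proposal is essentially the paper's own argument: the same two adversarial continuations, the same exploitation of rule~3 to place a $b$-item in every red bin, the same sand volumes, and the same balancing yielding $\alpha=(197-\sqrt{36541})/27$.

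Two small corrections. First, your stated ``main obstacle'' for the sand is misdiagnosed: the algorithm cannot put sand into medium or large bins \emph{because rule~2 forbids it}, not because the wasted space is too small. So there is nothing to verify on the algorithm side---rule~2 immediately forces at least $V_{\text{sand}}$ separate bins. The only genuine volume check is on the \emph{optimum} side, confirming the sand fits into the residual space. Second, your sketch of the optimum for the first continuation is incomplete: the $\alpha N/3$ items of size $b$ are not matched with any large item (there are only $(1-\alpha/3)N/3$ large items), so the optimum also uses $\alpha N/12$ bins packed with four $b$-items each; your sand volume $\frac{24+7\alpha}{324}N$ is already consistent with this, so only the description needs adjusting.
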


\subsection{Packing red medium items one per bin, postponing the coloring}

Now, consider the algorithm that packs red items alone into bins and makes sure that at most $3/4$ of the blue items of a certain type are smaller than the smallest red item of this type. We define a new class of algorithms $B_2$ that distinguish, possibly among others, the following disjoint subintervals (types): 

\begin{itemize}
\item $I_m = (1/3, y]$
\item $I_\lambda = (0,\lambda]$
\end{itemize}

Furthermore, algorithms in $B_2$ satisfy the following rules:

\begin{enumerate}
\item There is a parameter $\alpha$ s.t. an $\alpha$-fraction of the items of side length in $I_m$ are packed 1 per bin (``red items''), the rest are packed 4 per bin (``blue items'').
\item No bin that contains an item of side length in $I_\lambda$ contains an item of side length larger than $1/2$ or an item of side length in $I_m$.
\item Items of side length in $I_m$ are initially packed one per bin. At some regular intervals, the algorithm fixes some of these items to be red, and does not pack additional items of the same type witht them.
\end{enumerate}

From rule 3 we can conclude that the algorithm gives the following guarantee: $3/4$ of the blue items with size in $I_m$ are not smaller than the smallest red item with size in $I_m$.

Let $a,b\in I_m, a<b$ as before. We again consider two different inputs, both starting with the same set of items: $\alpha N+\frac{1-\alpha}{4}N$ items of size $b$, and $\frac{3(1-\alpha)}{4}N$ items of size $a$. They arrive in such an order that all red items are $b$-items, and all bins with blue items contain one $b$- and three $a$-items. We require the $b$-item in the blue bins because of the postponement of the coloring: If the first blue item in a bin was an $a$-item, the algorithm would choose this item to become red and not one of the $b$-items. By rule 1, the algorithm needs $\frac{1-\alpha}{4}N+\alpha N = \frac{1+3\alpha}{4}N$ bins for these $N$ items.

In the first input, after the medium items arrived, we get $\frac{1-\alpha}{4}N$ large items of size $1-a$, followed by sand of total volume $\frac{13+7\alpha}{144}N$. The optimal solution can pack the $a$-items three per bin together with one $(1-a)$-item, using $\frac{1-\alpha}{4}N$ bins for these items. The $b$-items are packed four per bin, using $(\frac{\alpha}{4}+\frac{1-\alpha}{16})N$ bins. Note that the empty volume in all bins of these two types is $\frac{1-\alpha}{4}N\cdot \frac{2}{9} + (\frac{\alpha}{4}+\frac{1-\alpha}{16})N\cdot \frac{5}{9} = \frac{13+7\alpha}{144}N$, i.e. it equals exactly the volume of the sand, so the sand can be filled in these holes without using further bins. Hence, the optimal number of bins is $\frac{1-\alpha}{4}N + (\frac{\alpha}{4}+\frac{1-\alpha}{16})N = \frac{5-\alpha}{16}N$.

The algorithm uses, as discussed before, $\frac{1+3\alpha}{4}N$ bins for the medium items of size $a$ and $b$. The large items cannot be added to red medium items, as they do not fit together, thus the algorithm uses $\frac{1-\alpha}{4}N$ additional bins for the large items. Finally, according to rule 2, at least $\frac{13+7\alpha}{144}N$ additional bins are needed to pack the sand. This gives in total at least $\frac{1+3\alpha}{4}N + \frac{1-\alpha}{4}N + \frac{13+7\alpha}{144}N = \frac{85+79\alpha}{144}N$ bins. We find that the competitive ratio is at least
\begin{equation}
\frac{\frac{85+79\alpha}{144}N}{\frac{5-\alpha}{16}N} = \frac{85+79\alpha}{9(5-\alpha)}\label{eq:ineq6}
\end{equation} 

In the second input, $N/3$ items of size $1/2+\epsilon$ arrive after the medium items, followed by sand of total volume $5/36N$. The algorithm packs this input the same way as a $B_1$ algorithm, so the analysis carries over. We get a competitive ratio of at least
\begin{equation}
\frac{\frac{26-9\alpha}{36}N}{N/3} = \frac{26-9\alpha}{12} \label{eq:ineq7}
\end{equation}

It can be seen that (\ref{eq:ineq6}) is a function increasing in $\alpha$, while (\ref{eq:ineq7}) is decreasing in $\alpha$, hence the minimum over the maximum of two bounds is reached when they are equal. In that case, $\alpha=\frac{529-\sqrt{274441}}{54}\approx 0.0950$, and the lower bound for the competitive ratio becomes larger than $2.0954$.

\begin{theorem}
No algorithm in class $B_2$ for two-dimensional online hypercube packing can achieve a competitive ratio of less than $2.0954$.
\end{theorem}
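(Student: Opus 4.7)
The plan is to follow the two-input adversarial strategy used for class $B_1$, adapted to the stronger structural constraint of $B_2$: red items sit alone in their bins, and the postponement of the colouring guarantees that at most $3/4$ of the blue items of a given medium type are smaller than the smallest red item of that type. I would fix two sizes $a<b$ in $I_m$, both arbitrarily close to $1/3$, and start both inputs with the common prefix of $\alpha N+\tfrac{1-\alpha}{4}N$ items of size $b$, followed by $\tfrac{3(1-\alpha)}{4}N$ items of size $a$ (a total of $N$ medium items).

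The first key step is to argue that, under rules~1 and~3 combined with the arrival order, the algorithm is forced into the configuration where all $\alpha N$ red items are $b$-items (each alone in its bin) and each of the $\tfrac{(1-\alpha)N}{4}$ blue bins contains one $b$-item together with three $a$-items. The point is that $b$-items arrive first, so at the time the algorithm must fix some items as red, the $b$-items are the only candidates; moreover, seeding each blue bin with a $b$-item (the one tentatively placed during the one-per-bin phase) is then unavoidable. This yields $\tfrac{1+3\alpha}{4}N$ bins for the medium prefix, while saturating the rule~3 guarantee exactly (a $3/4$ fraction of the blue items is strictly smaller than the smallest red).

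For the first input I would append $\tfrac{(1-\alpha)N}{4}$ large items of side $1-a$ followed by sand of total volume $\tfrac{13+7\alpha}{144}N$. Offline I combine three $a$-items with one $(1-a)$-item in $\tfrac{(1-\alpha)N}{4}$ bins, pack four $b$-items per bin in $\bigl(\tfrac{\alpha}{4}+\tfrac{1-\alpha}{16}\bigr)N$ bins, and tune the sand volume so that all of the sand fits into the gaps of these bins, giving $\tfrac{5-\alpha}{16}N$ bins in total. Online, rule~2 forces the sand into at least $\tfrac{13+7\alpha}{144}N$ fresh bins, and no $(1-a)$-item fits into any medium bin (all of them contain a $b$-item, and $(1-a)+b>1$), so $\tfrac{(1-\alpha)N}{4}$ additional bins are needed for the large items; this yields the ratio in (\ref{eq:ineq6}). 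For the second input I would append $N/3$ items of side $1/2+\epsilon$ and sand of volume $\tfrac{5}{36}N$; the analysis matches the $B_1$ case verbatim, since a $(1/2+\epsilon)$-item still cannot be placed in a red singleton bin (which already contains a $b$-item of side $>1/3$), yielding (\ref{eq:ineq7}).

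Finally, since (\ref{eq:ineq6}) is increasing in $\alpha$ and (\ref{eq:ineq7}) is decreasing in $\alpha$, the worst-case $\alpha$ for any $B_2$-algorithm is the unique crossing point; solving the resulting quadratic $27\alpha^{2}-529\alpha+50=0$ yields $\alpha=(529-\sqrt{274441})/54\approx 0.0950$, at which both bounds exceed $2.0954$. The main obstacle is the forcing argument for the colouring: one must verify that the arrival order really does leave no alternative under the postponement mechanism of rule~3, so that the claimed lower bound on the number of bins used by the algorithm holds uniformly over all algorithms in $B_2$; the rest is the same volume-and-geometry bookkeeping already used for class $B_1$.
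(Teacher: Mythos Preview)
Your approach matches the paper's almost exactly: the same prefix of $b$- and $a$-items, the same two continuations, the same optimal packings, and the same balancing of (\ref{eq:ineq6}) against (\ref{eq:ineq7}) at $\alpha=(529-\sqrt{274441})/54$. The only substantive flaw is your justification for the second input.

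You assert that ``a $(1/2+\epsilon)$-item still cannot be placed in a red singleton bin (which already contains a $b$-item of side $>1/3$)''. This is geometrically false: a single square of side $b<1/2$ together with a square of side $1/2+\epsilon$ fits easily in a unit bin (e.g.\ side by side, since $b+(1/2+\epsilon)<1$). An algorithm in $B_2$ \emph{can} therefore place up to $\alpha N$ of the large items into its $\alpha N$ red singleton bins, and any valid lower bound must allow for this. Note that in the $B_1$ analysis you say you are copying ``verbatim'', large items \emph{are} placed into red bins; that is precisely where the $-\alpha$ term in $(1-\alpha)N/3$ comes from.

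The reason (\ref{eq:ineq7}) nonetheless holds for $B_2$ is a cancellation: compared with $B_1$, a $B_2$-algorithm uses $\alpha N$ red bins instead of $\alpha N/3$ for medium items, but can also absorb $\alpha N$ large items into red bins instead of only $\alpha N/3$. Explicitly,
\[
\underbrace{\frac{1+3\alpha}{4}N}_{\text{medium}} \;+\; \underbrace{\Bigl(\frac{N}{3}-\alpha N\Bigr)}_{\text{large}} \;+\; \underbrace{\frac{5}{36}N}_{\text{sand}} \;=\; \frac{26-9\alpha}{36}\,N,
\]
matching the $B_1$ total. If instead you carried through your stated premise (no large item into any red bin), the large-item term would be $N/3$ and the total would be $\tfrac{26+27\alpha}{36}N$, which is \emph{increasing} in $\alpha$; you would then be claiming a bound that the algorithm can in fact beat, and the crossing-point argument with (\ref{eq:ineq6}) would collapse. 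Replace the parenthetical with the correct accounting above and the proof goes through.
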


Note here that this is an even higher lower bound than the one shown in the previous Subsection \ref{subsec:lb2}, although we use postponement of the coloring here. This indicates that the space we waste by packing red medium items separately outweighs the advantage we get by having a guarantee about the size of the red item.

\section{Conclusion}

We have given improved general lower bounds as well as lower bounds for an important subclass of algorithms. We believe that our lower bound of 1.859 for rectangle packing could be further improved to 1.907 by extending the input sequence as given in Table \ref{tab:input_rect2}. However, we do not have a formal proof of what the
heaviest patterns are for the sets $T_i$ (our conjectures are listed in Table \ref{tab:patterns_for_dual_rect2}).
%
%\subsection{A lower bound of 1.907 using fifteen item types}
%
%Now, we extend the input from the previous subsection to obtain a lower bound of 1.907. The input items used are depicted in Table \ref{tab:input_rect2}, together with the optimal solution values. Again, $n$ items of each of the types arrive.
%
\begin{table}
\begin{center}
\begin{tabular}{|c|c|c|c|}
\hline 
type $j$ & width $w_j$ & height $h_j$ & $OPT(L_1\ldots L_j)/n\cdot 7224$\\ 
\hline 
1 & $1/4-30000\delta$ & \multirow{3}{*}{$1/1807+\epsilon$} & 1\\ 
\cline{1-2}\cline{4-4} 
2 & $1/4+10000\delta$ &  & 2 \\ 
\cline{1-2}\cline{4-4}
3 & $1/2+20000\delta$ &  & 4 \\ 
\hline 
4 & $1/4-3000\delta$ & \multirow{3}{*}{$1/43+\epsilon$} & 46 \\ 
\cline{1-2}\cline{4-4}
5 & $1/4+1000\delta$ &  & 88 \\ 
\cline{1-2}\cline{4-4}
6 & $1/2+2000\delta$ &  & 172 \\ 
\hline 
7 & $1/4-300\delta$ & \multirow{3}{*}{$1/7+\epsilon$} & 430 \\ 
\cline{1-2}\cline{4-4}
8 & $1/4+100\delta$ &  & 688 \\ 
\cline{1-2}\cline{4-4}
9 & $1/2+200\delta$ &  & 1204 \\ 
\hline 
10 & $1/4-30\delta$ & \multirow{3}{*}{$1/3+\epsilon$} & 1806 \\ 
\cline{1-2}\cline{4-4}
11 & $1/4+10\delta$ &  & 2408 \\ 
\cline{1-2}\cline{4-4}
12 & $1/2+20\delta$ &  & 3612 \\ 
\hline 
13 & $1/4-3\delta$ & \multirow{3}{*}{$1/2+\epsilon$} & 4515 \\ 
\cline{1-2}\cline{4-4}
14 & $1/4+\delta$ &  & 5418 \\ 
\cline{1-2}\cline{4-4}
15 & $1/2+2\delta$ &  & 7224 \\ 
\hline 
\end{tabular} 
\end{center}
\caption{The input items for a lower bound of 1.907.}
\label{tab:input_rect2}
\end{table}
\begin{table}
\centering
\begin{tabular}{|c|c|c|c|}
\hline 
$j$ & types to consider & heaviest patterns $p$ from $T_j$ & $w(p)\cdot 516211/516$ \\ 
\hline 
1 & 1 & $7224\times s_1$ & 7224 \\ 
\hline 
2 & 2, 4 & $5418 \times s_2$ & 5418 \\ 
\hline
\multirow{2}{*}3 & \multirow{2}{*}{3,4} & $1806\times s_3, 43\times s_4$ & \multirow{2}{*}{4816} \\ 
\cline{3-3}
&& $84\times s_3, 166\times s_4$&\\
\hline 
4 & 4 & $168\times s_4$ & 4704\\
\hline
5 & 5, 7 & $126\times s_5$ & 3528\\
\hline
\multirow{2}{*}{6} & \multirow{2}{*}{6, 7} & $42\times s_6, 7\times s_7$ & \multirow{2}{*}{3136}\\
\cline{3-3}
&& $12\times s_6, 22\times s_7$ &\\
\hline
7 & 7 & $24\times s_7 $ & 2688\\
\hline
\multirow{2}{*}{8} & \multirow{2}{*}{8, 10} & $18\times s_8$ & \multirow{2}{*}{2016}\\
\cline{3-3}
&& $6\times s_8, 8 \times s_{10}$&\\
\hline
9 & 9, 10 & $4\times s_9, 6\times s_{10}$ & 1904\\
\hline
10 & 10 & $8\times s_{10}$ & 1344\\
\hline
11 & 11, 13 & $3\times s_{11}, 4\times s_{13}$ & 1176\\
\hline
\multirow{2}{*}{12} & \multirow{2}{*}{12,13} & $2\times s_{12}, 2\times s_{13}$ & \multirow{2}{*}{1008}\\
\cline{3-3}
&& $1\times s_{12}, 4\times s_{13}$ &\\
\hline
13 & 13 & $4\times s_{13}$&672\\
\hline
14&14&$3\times s_{14}$&504\\
\hline
15&15&$1\times s_{15}$&336\\
\hline
\end{tabular} 
\caption{We believe that these are the patterns that need to be considered for verifying the feasibility of the dual solution of the 1.907 lower bound.}
\label{tab:patterns_for_dual_rect2}
\end{table}

The main open question is how to reduce the gap between the upper and lower bounds for these problems, which remain fairly large.

\bibliographystyle{elsarticle-num} 
\bibliography{biblio}

\end{document}